\newcommand{\Natural}{\mathbb{N}}
\newcommand{\Naturalstar}{\mathbb{N}_*}
\newcommand{\Real}{\mathbb{R}}
\newcommand{\innprod}[2]{\left\langle{#1},{#2}\right\rangle}
\newcommand{\norm}[1]{\left\|{#1}\right\|}
\newcommand{\lev}[1]{\ensuremath{\operatorname{lev_{\leq #1}}}}
\DeclareMathOperator{\argmin}{arg\,min}
\DeclareMathOperator{\sign}{sgn}
\DeclareMathOperator{\Fix}{Fix}
\DeclareMathOperator{\supp}{supp}
\DeclareMathOperator{\expect}{\mathsf{E}}
\DeclareMathOperator{\prob}{Prob}
\DeclareMathOperator{\shr}{shr}
\theoremstyle{definition}
\newtheorem{theorem}{Theorem}
\newtheorem{algo}{Algorithm}
\newtheorem{definition}{Definition}
\newtheorem{assumption}{Assumption}
\begin{document}

\title{Generalized Thresholding\\ and Online Sparsity-Aware Learning in a
  Union of Subspaces}

\author{Konstantinos Slavakis,\authorcr University of Minnesota, Digital
  Technology Center (DTC),\authorcr 487 Walter Library, 117 
  Pleasant St.\ SE,\authorcr Minneapolis, MN 55455, USA.\authorcr Email:
  \url{slavakis@dtc.umn.edu} \and Yannis Kopsinis,\authorcr University of
  Granada, Dept.\ of Applied Physics, Granada,
  Spain.\authorcr \url{ykopsinis@gmail.com} \and Sergios
  Theodoridis,\authorcr University of Athens, Dept.\ of
  Informatics \& Telecommunications,\authorcr Athens, Greece.\authorcr
  Email: \url{stheodor@di.uoa.gr} \and Stephen 
  McLaughlin,\authorcr Heriot Watt University, School of
  Engineering \& Physical Sciences,\authorcr Edinburgh, UK.\authorcr Email:
  \url{S.McLaughlin@hw.ac.uk}}

\date{}

\maketitle

\begin{abstract}
This paper studies a sparse signal recovery task in time-varying
(time-adaptive) environments. The contribution of the paper to
sparsity-aware online learning is threefold; first, a Generalized
Thresholding (GT) operator, which relates to both convex and non-convex
penalty functions, is introduced. This operator embodies, in a unified way,
the majority of well-known thresholding rules which promote
sparsity. Second, a non-convexly constrained, sparsity-promoting, online
learning scheme, namely the Adaptive Projection-based Generalized
Thresholding (APGT), is developed that incorporates the GT operator with a
computational complexity that scales linearly to the number of
unknowns. Third, the novel family of partially quasi-nonexpansive mappings
is introduced as a functional analytic tool for treating the GT
operator. By building upon the rich fixed point theory, the previous class
of mappings helps us, also, to establish a link between the GT operator and
a union of linear subspaces; a non-convex object which lies at the heart of
any sparsity promoting technique, batch or online. Based on such a
functional analytic framework, a convergence analysis of the APGT is
provided. Furthermore, extensive experiments suggest that the APGT exhibits
competitive performance when compared to computationally more demanding
alternatives, such as the sparsity-promoting Affine Projection Algorithm
(APA)- and Recursive Least Squares (RLS)-based techniques. 
\end{abstract}

\section{Introduction}\label{sec:intro}

Sparsity-aware learning has been a topic at the forefront of research over
the last ten years or so \cite{CandesRombergTao06,
  Donoho2006}. Considerable effort has been invested in developing
efficient schemes for the recovery of sparse signal/parameter
vectors. However, most of these efforts have focussed on batch processing,
via the \textit{Compressed Sensing} or \textit{Sampling (CS)} framework.
In CS, an iterative algorithm is mobilized to solve the estimation task
once \textit{all} measurements (training data) have been collected by the
processing unit \cite{CandesRombergTao06, Donoho2006, BlumensathJST2010,
  Blumensath.IHT, Foucart.IHT}. It is only very recently that online
(time-adaptive) algorithms have been developed, where the training data are
processed sequentially, and the sparse signal to be recovered has the
freedom to be time-varying \cite{ChenHero09, Angelosantejournal2010,
  babadisparls, mileounisspadomp, myyy.soft.thres.icassp10,
  Kopsinis.Slavakis.TheodoridisIEEESP2011,
  Su_L0_LMS_performance_IEEESP2012}. Both CS and online techniques share a
common strategy, namely \textit{thresholding}; i.e, a thresholding rule is
used to impose sparsity-aware a-priori knowledge: some of the components of
the signal/vector to be estimated are kept intact, while the rest of them
are shrunk under some user-defined rule. Two thresholding operators
dominate the literature: (i) \textit{hard} thresholding, a brute force
method, where shrinking is achieved by setting the size of some of the
vector components to zero, and (ii) \textit{soft} thresholding, where the
shrinking operation is based on the (weighted) $\ell_1$-norm of the vector.

A large number of thresholding operators have been studied thoroughly, both
in theoretical and experimental contexts, mainly within the statistics
community \cite{Frank1993, Tibshirani.Lasso, Gao1997, Gao1998, Tao2000,
  Antoniadis2001, Fan_Li2001, Zou2006, Antoniadis2007, ZouLi2008, She.09,
  Zhang2010, Friedman.etal.coordinate.descent.10, SimonEtal2011,
  Mazumder2011}. It is by now well-established that hard thresholding, a
discontinuous operator, has a tendency for larger variance of the
estimates. Moreover, due to its discontinuity, hard thresholding can lead
to instabilities, in the sense of being sensitive to small changes in the
training data \cite{Antoniadis2007}. Soft-thresholding, is a continuous
operator, that tends to introduce bias in the estimates. Therefore,
alternative thresholding rules have been proposed in an effort to overcome
these drawbacks \cite{Frank1993, Antoniadis2001, Fan_Li2001, Zhang2010,
  Mazumder2011}. These advances in thresholding operators are strongly
connected to optimization tasks; they are obtained by minimizing squared
error terms regularized by, usually, \textit{non-convex} penalty functions.

 The contribution of this paper is threefold. First, the
 \textit{generalized thresholding} (GT) operator is introduced, which
 encompasses classical hard and soft thresholding rules, as well as the
 recent advances of \cite{Frank1993, Tibshirani.Lasso, Gao1997, Gao1998,
   Tao2000, Antoniadis2001, Fan_Li2001, Zou2006, Antoniadis2007, ZouLi2008,
   Zhang2010, Friedman.etal.coordinate.descent.10, SimonEtal2011,
   Mazumder2011}. Moreover, the proposed framework, motivated by the rich
 \textit{fixed point} theory \cite{GoebelKirk,
   bauschke.combettes.book}, is general enough to provide means for
 designing novel thresholding rules and/or incorporating a priori
 information associated with the sparsity level, i.e., the number of
 nonzero components, of the sparse vector to be recovered. More
 importantly, GT is also allowed to \textit{non-convexly} constrain the
 unknown vector.

Second, the GT operator is incorporated into a signal/parameter estimation
framework. Here, we choose the set theoretic estimation approach
\cite{CombettesFoundations}, and in particular its online version,
introduced in \cite{YamadaOguraAPSMNFAO} and extended in
\cite{KostasAPSMatNFAO, Sy.qne.siopt}. In particular, the \textit{Adaptive
  Projection-based Generalized Thresholding (APGT)} algorithm is proposed
having three important merits. a) It is an online algorithm, b) it promotes
sparse solutions effectively via the flexibility provided by the GT
operator and c) its computational complexity \textit{scales linearly to the
  number of unknowns}.  With respect to performance, although APGT shows a
low computational load, the experimental validation of
Section~\ref{sec:simulations} demonstrates that it exhibits a competitive
performance even when compared to very recently developed,
sparsity-promoting, and computationally more demanding alternatives, such
as the APA- and RLS-based techniques \cite{Angelosantejournal2010,
  Werner.Diniz, Hoshuyama_ipapa_2004, paleologu2010efficient}.

It should be noted that the adopted set theoretic estimation framework was
also utilized in \cite{Kopsinis.Slavakis.TheodoridisIEEESP2011}, where
sparsity was induced via $\ell_1$-based constraints, well-known to be
convex and intimately connected to soft thresholding operations. In
contrast, the fact that the GT operator is a ``non-convex'' mapping poses
certain challenges for the convergence analysis of the
algorithm. Specifically, the existing theory \cite{YamadaOguraAPSMNFAO,
  KostasAPSMatNFAO, Sy.qne.siopt} which, so far, has been developed
around convex sets and constraints is not rich enough to cover the APGT
case. In order to theoretically support the incorporation of GT into
learning mechanisms, such as the APGT, a novel family of operators,
hereafter referred to as \textit{partially quasi-nonexpansive mappings,} is
introduced, to the best of our knowledge, for the first time. It is the
introduction of the partially quasi-nonexpansive mappings and their nice
properties, which allowed the convergence analysis of APGT to be
developed. These operators serve as a sound theoretical tool which allows
the use of variational analysis \cite{Rockafellar.Wets} and fixed point
theory \cite{GoebelKirk, bauschke.combettes.book} to attack
\textit{non-convexly} constrained learning problems. It is shown that GT
belongs to this class of nonlinear mappings, with its \textit{fixed point
  set} being a union of subspaces; a non-convex object which lies at the
heart of any sparsity-promoting technique \cite{Lu.Do.2008,
  eldar.michali.union.2009}.

It should be stressed that, propelled by such a generic operator
theoretical framework, the proposed GT mapping offers a sound mathematical
basis for infusing sparsity arguments into both batch (CS) and online
approaches, beyond the set-theoretic framework adopted here. Moreover, the
present manuscript shows a value beyond sparsity-aware learning. Through
the novel concept of the partially quasi-nonexpansive mappings, this
study stands also as the first step toward the extension of
\cite{YamadaOguraAPSMNFAO, KostasAPSMatNFAO, Sy.qne.siopt} to
non-convexly constrained online learning tasks.

The remainder of the paper is organized as follows. The problem under
consideration is stated in Section \ref{sec:problem}. In Section
\ref{sec:GT}, the GT operator is introduced. The proposed
APGT algorithm is given in Section \ref{sec:algo}, together with its
properties and the definition of the novel family of partially
quasi-nonexpansive mappings. Section \ref{sec:simulations} contains the
experimental validation of APGT. A number of appendices support
theoretically the developments exposed throughout the paper. More
specifically, in App.~\ref{sec:properties.T_K} the properties of the
generalized thresholding operator are studied rigorously, and the
convergence analysis of the proposed algorithm is performed in
App~\ref{sec:analysis.algo}. A preliminary version of this study was
presented in \cite{gt.icassp.12}.

\section{Problem Statement and Related Work}\label{sec:problem}

We will denote the set of all non-negative integers, positive integers, and
real numbers by $\Natural$, $\Naturalstar$, and $\Real$,
respectively. Given any integers $j_1, j_2$, such that $j_1\leq j_2$, let
$\overline{j_1, j_2} \coloneqq \{j_1, j_1+1, \ldots,j_2\}$.

The stage for discussion will be the Euclidean space $\Real^L$, where
$L\in\Naturalstar$. Given any pair of vectors $\bm{a}_1, \bm{a}_2\in
\Real^L$, the inner product in $\Real^L$ is defined as the classical
vector-dot product $\innprod{\bm{a}_1}{\bm{a}_1} \coloneqq
\bm{a}_1^{\top} \bm{a}_2$, where $\top$ stands for vector/matrix
transposition. The induced norm is $\norm{\cdot}\coloneqq
\sqrt{\innprod{\cdot}{\cdot}}$.

Our discussion will revolve around the following celebrated linear model:
\begin{equation}
y_n = \bm{u}_n^{\top}\bm{a}_*  + v_n, \quad\forall n\in\Natural,
\label{eq:regression_Model2}
\end{equation}
where $\bm{a}_*\in \Real^L$ is an unknown vector/signal,
$(\bm{u}_n,y_n)_{n\in\Natural} \subset \Real^L \times \Real$ is a sequence
of known training data, and $(v_n)_{n\in\Natural}$ stands for the noise
process. In other words, the unknown $\bm{a}_*$ is ``sensed'' by a sequence
of input vectors $(\bm{u}_n)_{n\in\Natural}$, via the inner product of
$\Real^L$, in order to produce the noisy outputs
$(y_n)_{n\in\Natural}$. The vector $\bm{a}_*$ is considered to be
\textit{sparse}, i.e., most of its components are zero. If we define
$\norm{\bm{a}_*}_0$ to stand for the number of non-zero components of
$\bm{a}_*$, then the assumption that $\bm{a}_*$ is sparse can be
equivalently given by $K_* \coloneqq \norm{\bm{a}_*}_0\ll L$, and the
vector $\bm{a}_*$ will be called \textit{$K_*$-sparse.}

This study attacks the following inverse problem: estimate the unknown
sparse vector $\bm{a}_*$ by utilizing the sequence of training data
$(\bm{u}_n, y_n)_{n\in\Natural}$. A family of algorithms which shares a
similar objective is the \textit{Compressed Sensing} or \textit{Sampling
  (CS)} framework \cite{CandesRombergTao06, Donoho2006}. Given a fixed
number $N\in \Naturalstar$ of training data $(\bm{u}_i,
y_i)_{i=n-N+1}^{n}$, a CS algorithm is mobilized in order to compute an
estimate $\bm{a}_n$ of $\bm{a}_*$. CS belongs to the class of batch
algorithms, i.e., in the case where the datum $(\bm{u}_{n+1},y_{n+1})$
enters the system, a CS algorithm starts \textit{from scratch}, and
triggers a generally time consuming iterative procedure which operates on
the data $(\bm{u}_i, y_i)_{i=n-N+2}^{n+1}$ for computing the updated estimate
$\bm{a}_{n+1}$ of $\bm{a}_*$. In contrast to \textit{batch learning}
approaches, this manuscript focuses on sparsity-aware \textit{online
  learning}, i.e., an algorithmic framework which satisfies the following
requirements.

\begin{enumerate}[leftmargin=0pt,itemindent=20pt]
\item The estimates of $\bm{a}_*$ should be updated in a simple and
  efficient way every time that a new datum $(\bm{u}_n,y_n)$ enters
  the system.  The need to mobilize an optimization procedure from
  scratch, for every new datum $(\bm{u}_n,y_n)$, as in CS, should be
  avoided.
\item The operations needed in order to update the estimate should be
  of low computational complexity; hopefully of \textit{linear
    complexity with respect to the number of unknowns,} i.e.,
  $\mathcal{O}(L)$.
\item The unknown $\bm{a}_*$ has also the freedom to be
  \textit{time-varying}. Thus, an online learning scheme should be also
  able to \textit{quickly track} any variations of $\bm{a}_*$.
\end{enumerate}

The mainstream of sparsity-promoting online methods exploits training
data $(\bm{u}_n,y_n)_{n\in\Natural}$ in the context of classical adaptive
filtering \cite{Haykin}; a quadratic objective function is used to
quantify the designer's perception of loss. Additionally, a convex
differentiable function is regularized by a sparsity promoting term,
usually one that builds around the $\ell_1$ norm penalty function, and a
minimizer of the resulting optimization task is sought either in the RLS or
the LMS rationale, e.g., \cite{ChenHero09, Angelosantejournal2010,
  babadisparls, mileounisspadomp}. Another sparsity-promoting methodology,
where different components of the vector estimates are weighted under
several user-defined rules, is given by \textit{proportionate-type}
schemes \cite{Hoshuyama_ipapa_2004, paleologu2010efficient,
  Werner.Diniz}. Very recently, a novel online method for the recovery of
sparse signals, based on set theoretic estimation arguments
\cite{CombettesFoundations, tsy.sp.magazine}, was developed in
\cite{Kopsinis.Slavakis.TheodoridisIEEESP2011}, and extended for
distributed learning in \cite{Simos.sparse.distr}.

The set theoretic estimation philosophy departs from the standard approach
of constructing a loss function first; instead, it initially identifies a
\textit{set} of solutions which are in \textit{agreement} with the
available measurements as well as the available a-priori knowledge. A
popular strategy is to define, at each time instance $n\in\Natural$, a
closed convex subset of $\Real^L$, by means of the training data pair
$(\bm{u}_n,y_n)$, to contain the unknown $\bm{a}_*$ with high
probability. Different alternatives exist on how to ``construct'' such
convex regions. A popular choice takes the form of a \textit{hyperslab}
around $(\bm{u}_n,y_n)$, which is defined as:
\begin{equation}
S_n[\epsilon_n] \coloneqq \bigl \{\bm{a}\in\Real^L:\ \bigl|\bm{u}_n^{\top}
\bm{a} - y_n \bigr| \leq \epsilon_n \bigr\},\quad \forall n
\in\Natural,\label{eq:Hyperslab}
\end{equation}
for some user-defined tolerance $\epsilon_n\geq 0$, and for $\bm{u}_n\neq
\bm{0}$. The parameter $\epsilon_n$ determines, essentially, the width of
the hyperslabs, and it implicitly models the effects of the noise, as well
as various other uncertainties, like measurement inaccuracies, calibration
errors, etc. For example, if the noise were bounded, i.e., $\exists\rho\geq
0$ such that $|v_n| \leq \rho$, $\forall n\in\Natural$, then for any choice
of $\epsilon_n\geq \rho$ it is easy to verify that $\bm{a}_*\in
S_n[\epsilon_n]$, $\forall n\in\Natural$. A rigorous stochastic analysis in
the case of bounded noise, where almost sure convergence of the sequence of
estimates is proved for a special member of the rich family of the
\textit{Adaptive Projected Subgradient Method (APSM)}
\cite{YamadaOguraAPSMNFAO, KostasAPSMatNFAO, Sy.qne.siopt}, can be found in
\cite{simos.stochastic.analysis.spl}. In the case of unbounded noise, the
well-known Tchebichev inequality \cite{Loeve.1} suggests that for any
$\epsilon_n>0$,
\begin{equation*}
\prob\bigl\{\bm{a}_*\in S_n[\epsilon_n]\bigr\} = \prob \bigl\{
\bigl|\bm{u}_n^\top \bm{a}_* - y_n \bigr| \leq \epsilon_n \bigr\} \geq 1 -
\frac{\expect\bigl\{\bigl|\bm{u}_n^\top \bm{a}_* - y_n
  \bigr|^{2}\bigr\}}{\epsilon_n^2} = 1 -
\frac{\expect\bigl\{|v_n|^{2}\bigr\}}{\epsilon_n^2},
\end{equation*}
where $\prob$ denotes probability, and $\expect$ stands for the
expectation operator. In other words, $\epsilon_n$ defines also a
measure of \textit{confidence} in having the unknown $\bm{a}_*$ in the
hyperslabs \eqref{eq:Hyperslab}.

The (metric) projection mapping $P_{S_n[\epsilon_n]}$
\cite{bauschke.combettes.book} onto the hyperslab $S_n[\epsilon_n]$
\eqref{eq:Hyperslab} is given by the following simple analytic formula:
\begin{equation}
P_{S_n[\epsilon_n]}(\bm{a}) = \bm{a} +
\begin{cases} \frac{y_n - \epsilon_n - \bm{u}_n^{\top} \bm{a}}
  {\norm{\bm{u}_n}^2}\bm{u}_n, & \text{if}\ y_n - \epsilon_n > \bm{u}_n^{\top} \bm{a}
,\\ 0, & \text{if}\ |\bm{u}_n^{\top} \bm{a} - y_n |\leq
  \epsilon_n,\\ \frac{y_n + \epsilon_n - \bm{u}_n^{\top} \bm{a}}
          {\norm{\bm{u}_n}^2}\bm{u}_n, & \text{if}\ y_n + \epsilon_n <
          \bm{u}_n^{\top} \bm{a}.
\end{cases}
\label{project.hyperslab}
\end{equation}

In \cite{Kopsinis.Slavakis.TheodoridisIEEESP2011} sparsity was induced
within the convex analytic framework, and particularly via projections onto
convex $\ell_1$-balls. Here, the fixed point theoretical framework
\cite{GoebelKirk, bauschke.combettes.book} is used in order to generalize
the set theoretic estimation approach to support sparsity promoting
constraints, which do not lie under the umbrella of convexity. This is
realized via a novel operator theoretic framework, which embraces a wide
range of thresholding rules referred to as Generalized Thresholding (GT)
operators, described next.

\section{The Generalized Thresholding (GT) Mapping}\label{sec:GT}

A couple of definitions are necessary prior to introducing GT mapping.

\begin{definition}[The ordered tuple notation]\label{def:ord.tuple}
Given $K\in\overline{1,L}$, define the set of all \textit{ascending tuples
  of length $K$} as $\mathscr{T}(K,L)\coloneqq \{(l_1,l_2,\ldots,
l_K):1\leq l_1<l_2< \ldots < l_K \leq L\}$. Clearly, the cardinality of
$\mathscr{T}(K,L)$ is $\binom{L}{K}$. An example of such an ordered tuple
is the support of a vector $\bm{x}\in \Real^L$, defined by
$\supp(\bm{x})\coloneqq (l\in \overline{1,L}: x_l\neq 0)\in
\mathscr{T}(|\supp(\bm{x})|,L)$, where $|\cdot|$ stands for the cardinality
of a set.
\end{definition}

\begin{definition}[Subspace associated to a
    tuple]\label{def:special.subspace} Given $J\in\mathscr{T}(K,L)$,
  let $M_{J} \coloneqq \left\{\bm{a}\in\Real^L: a_l=0, \forall l\notin
  J\right\}$. Clearly, $M_J$ is a linear subspace of
  $\Real^L$. Moreover, notice that if $J_1 \subset J_2$, then $M_{J_1}
  \subset M_{J_2}$. In particular, if $\supp(\bm{x}_*) \subset J$,
  then $\bm{x}_*\in M_J$. An illustration of $M_J$ can be found in
  Fig.~\ref{fig:GT}.
\end{definition}

Motivated by the hard thresholding operator, let us introduce here the main
object of this study.

\begin{definition}[The mapping $T_{\text{GT}}^{(K)}$]\label{def:GT} Fix
  a positive integer $K\in \overline{1,L-1}$ and define
  $T_{\text{GT}}^{(K)}:\Real^L \rightarrow\Real^L$ as follows. For any
  $\bm{x}\in \Real^L$, the output $\bm{z}\coloneqq
  T_{\text{GT}}^{(K)}(\bm{x})$ is obtained according to the following
  steps:

\begin{enumerate}[leftmargin=0pt,itemindent=20pt]
\item\label{compute.tuple} Compute, first, the tuple $J_{\bm{x}}^{(K)}\in
  \mathscr{T}(K,L)$ which contains the indices of the $K$ largest, in
  absolute value, components of $\bm{x}$. To avoid any ambiguity, in the
  case where we identify more than one component of $\bm{x}$ with the same
  absolute value, we always choose the one with smallest index.
\item Define $\xi_{\bm{x}}^{(K)}\coloneqq \min \bigl\{|x_l|: l\in
  J_{\bm{x}}^{(K)}\bigr\}$. In words, $\xi_{\bm{x}}^{(K)}$ is the smallest
  among the $K$ largest absolute values of the components of
  $\bm{x}$. Clearly, $\forall l \notin J_{\bm{x}}^{(K)}$, $|x_l|\leq
  \xi_{\bm{x}}^{(K)}$.
\item\label{thres.components} Compute the components of $\bm{z}$ as:
  $z_l\coloneqq x_l$, if $l\in J_{\bm{x}}^{(K)}$, and $z_l\coloneqq
  \shr(x_l)$, if $l\notin J_{\bm{x}}^{(K)}$, where the function
  $\shr:\mathcal{D}_{\bm{x}}\rightarrow\Real$, with $\mathcal{D}_{\bm{x}}
  \coloneqq [-\xi_{\bm{x}}^{(K)},\xi_{\bm{x}}^{(K)}]$, satisfies the
  following properties:

\item\label{same.sign} $\tau\shr(\tau)\geq 0$, $\forall \tau\in
  \mathcal{D}_{\bm{x}}$. 
\item\label{shrinks} $|\shr(\tau)|\leq |\tau|$, $\forall
  \tau\in\mathcal{D}_{\bm{x}}$.
\item\label{strictly.shrinks} Going a step further than the previous
  property, we assume also that given any sufficiently small $\epsilon>0$,
  there exists a $\delta>0$, such that for any $\bm{x}\in \Real^L$, and
  $\forall \tau\in \mathcal{D}_{\bm{x}} \setminus (-\epsilon, \epsilon)$,
  $|\shr(\tau)| \leq |\tau|-\delta$. In other words, $\delta$ could be a
  user-defined parameter which guarantees that the function $\shr$ acts as
  a \textit{strict} shrinkage operator for all the components of $\bm{x}$
  with indexes not in $J_{\bm{x}}^{(K)}$. The $\epsilon$ parameter is
  introduced in order to exclude $0$ from the picture, since at this point
  the $\shr$ function usually takes the value of $0$, i.e., $\shr(0)=0$
  (see Fig.~\ref{fig:variousthresholds1}).
\end{enumerate}
\end{definition}

Put in other words, the GT mapping operates as follows; given the input
vector $\bm{x}$, a number of $K$ components of $\bm{x}$, i.e., those with
the $K$ largest absolute values, are kept intact, while the rest of them
are shrunk according to the $\shr$ function. See, for example,
Fig.~\ref{fig:GT}.

\begin{figure}[!th]
\centering
\includegraphics[width=0.3\textwidth]{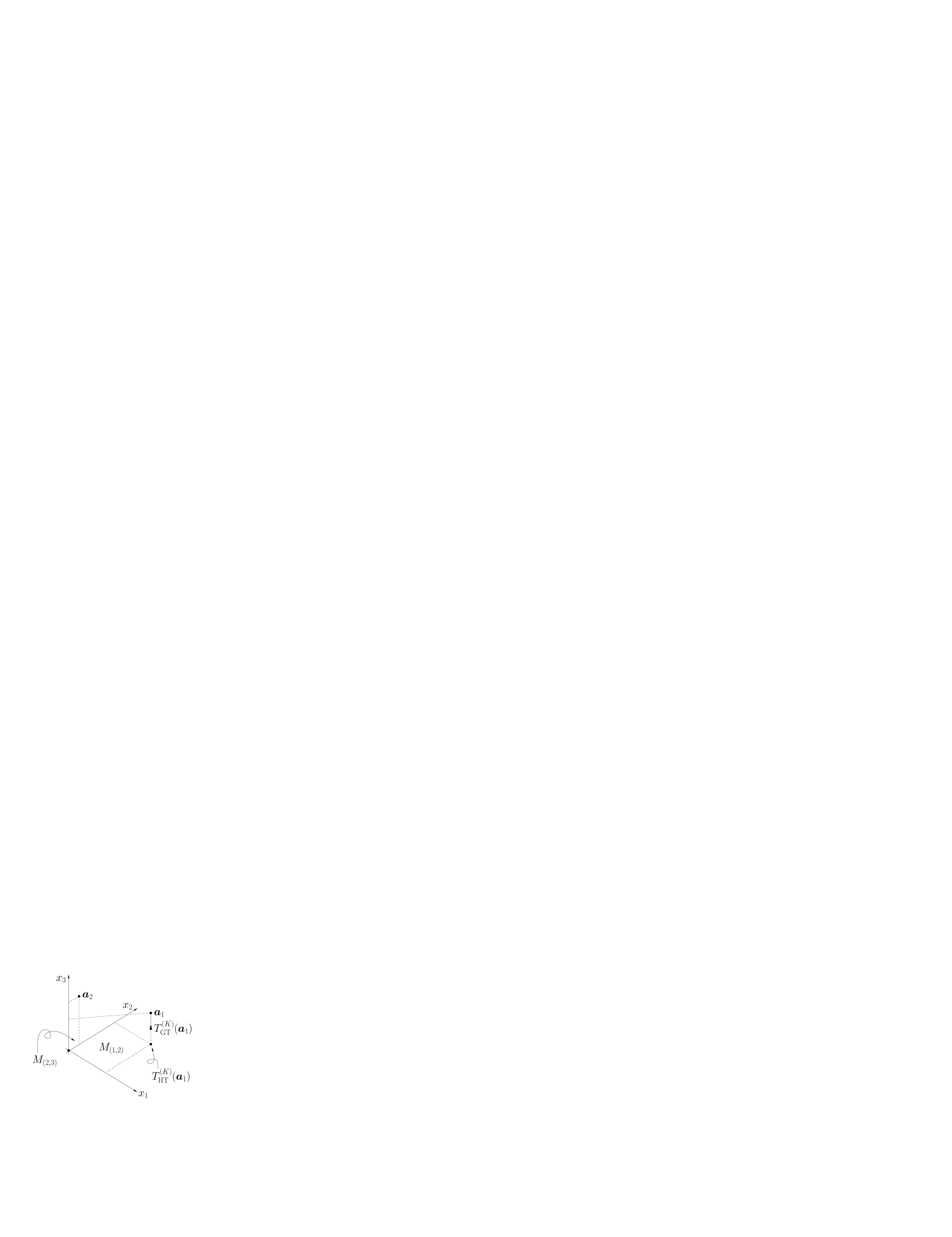}
\caption{An illustration of $T_{\text{GT}}^{(K)}$, for the case of a
  $3$-dimensional space, i.e., $L \coloneqq 3$, and $K \coloneqq 2$. Take
  for example the point $\bm{a}_1$. The $K=2$ largest, in magnitude,
  coordinates of $\bm{a}_1$ are the first two ones, i.e.,
  $J_{\bm{a}_1}^{(K)}= (1,2)$. The linear subspace $M_{(1,2)}$ stands for
  all those vectors in $\Real^3$ where all the components, except from
  those in the positions $(1,2)$, are equal to $0$. The first two
  components of $\bm{a}_1$ stay unaffected by $T_{\text{GT}}^{(K)}$, while
  the third one is shrinked by the function $\shr$. If this third
  coordinate is set to $0$, then $T_{\text{GT}}^{(K)}$ acts as the
  hard-thresholding mapping $T_{\text{HT}}^{(K)}$. On the other hand, the
  point $\bm{a}_2$ is already located in $M_{(2,3)}$, i.e., its first
  coordinate is $0$. Hence, the application of $T_{\text{GT}}^{(K)}$ to
  $\bm{a}_2$ has no effect, and $\bm{a}_2$ stays \textit{fixed} to its
  original position.}\label{fig:GT}
\end{figure}

\begin{figure*}[!th]
\centering
\includegraphics[width=1 \textwidth]{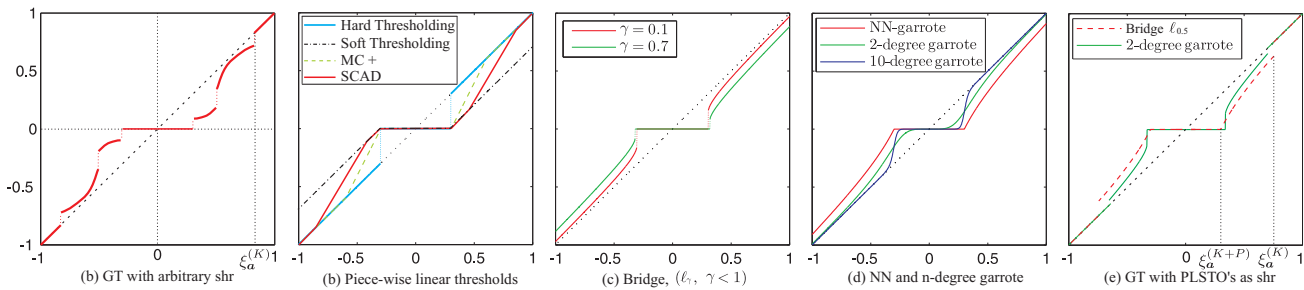}
\caption{Plots of Penalized Least-Squares Thresholding Operators (PLSTO)
  for various choices of the penalty function $p$ in
  \eqref{def.plsto}.}\label{fig:variousthresholds1}
\end{figure*}

The function $\shr$ is user-defined and it can get various forms as long as
it complies with the properties described before. As an example, a
thresholding operator in the GT family based on an arbitrary $\shr$
function is shown in Fig. \ref{fig:variousthresholds1}a. Note that it
comprises both discontinuities and nonlinear regions. A more systematic way
to built GT's is via the univariate \textit{Penalized Least Squares}
optimization task; given $\tilde{a}\in \Real$,
\begin{equation}
\min_{a} \frac{1}{2} \left(\tilde{a} - a\right)^2 +
  \lambda p(|a|), \label{def.upls}
\end{equation}
where $p(\cdot)$ is nonnegative, nondecreasing and differentiable function
on $(0,\infty)$. This problem is at the heart of many batch sparsity
promoting algorithms as it is discussed in App. \ref{sec:PLSTO}. It turns
out, that \eqref{def.upls} has, in general, a unique solution which is
obtained when $\tilde{a}$ is properly thresholded/shrinked
\cite{Antoniadis2007}. Accordingly, let us define the \textit{Penalized
  Least-Squares Thresholding Operator (PLSTO)} as the mapping which maps a
given $\tilde{a}$ to the previous unique minimizer:
\begin{equation}
T_{\text{PLSTO}}^{(p,\lambda)}: \tilde{a} \mapsto \argmin_{a\in \Real}
\frac{1}{2\lambda} \left(\tilde{a} - a\right)^2 +
  p(|a|). \label{def.plsto}
\end{equation}
In simple words, the PLSTO of \eqref{def.plsto} \textit{shrinks,} in some
sense that is dictated by $p$, the size of $\tilde{a}$. Examples of PLSTO's
exhibiting different characteristics are shown in
Fig.~\ref{fig:variousthresholds1}(b-d) and details together with the
corresponding literature review can be found in App. \ref{sec:PLSTO}.  All
the thresholding rules of Fig.~\ref{fig:variousthresholds1}(b-d) satisfy
the properties of Def.~\ref{def:GT}.\ref{same.sign} and
Def.~\ref{def:GT}.\ref{shrinks}. Moreover, they also satisfy the property
of Def.~\ref{def:GT}.\ref{strictly.shrinks} in their respective
strict-shrinkage region, i.e., in the case where the $\xi_{\bm{x}}^{(K)}$
lies in the domain of all those $\tau\in \Real$ such that
$\bigl|T_{\text{PLSTO}}^{(p)}(\tau)\bigr| < |\tau|$.  Notice, also, that we
do not impose any regularity conditions on $\shr$, like continuity or
differentiability, unlike most of the known PLSTO do \cite{Frank1993,
  Fan_Li2001, Zhang2010, Mazumder2011}. As a result, \textit{any} PLSTO,
i.e., \eqref{def.plsto}, can be used in the place of the $\shr$ function in
the GT operator. Examples of GT having PLSTO's as their $\shr$ function are
shown in Figs.~\ref{fig:variousthresholds1}a and
\ref{fig:variousthresholds1}e. Moreover, GT where the $\shr$ function is
the Bridge $\ell_{0.5}$ and the \textit{Smoothly Clipped Absolute Deviation
  Penalty (SCAD)} threshold are used and further discussed in the numerical
experiments section.

\section{The APGT Algorithm, Its Properties, and a Novel Operator Theoretic
Framework}\label{sec:algo}

\begin{algo}[The Adaptive Projection-based Generalized Thresholding (APGT)
    algorithm]\label{APGT} Given the user-defined sparsity level
  $K\in\overline{1,L-1}$, the sequence of non-negative parameters
  $(\epsilon_n)_{n\in\Natural}$, the number $q\in \Naturalstar$ of the
  hyperslabs to be processed concurrently at every time instant, the
  function $\shr$ for the generalized thresholding operation, and an
  arbitrary initial point, $\bm{a}_0\in \Real^L$, execute the following,
  for every $n\in \Natural$.

\begin{enumerate}[leftmargin=0pt,itemindent=20pt]
\item Define the sliding window $\mathcal{J}_n\coloneqq \overline{ \max\{0,
  n-q+1\}, n}$ on the time axis, of size at most $q$. The set
  $\mathcal{J}_n$ defines all the indices corresponding to the hyperslabs,
  which are to be processed at the time instant $n$. Among these, identify
  $\mathcal{I}_n\coloneqq \bigl\{i\in \mathcal{J}_n:
  P_{S_i[\epsilon_i]}(\bm{a}_n)\neq \bm{a}_n \bigr\}$, which correspond to
  the \textit{active} hyperslabs. Moreover, for every $i\in\mathcal{I}_n$,
  define the weight $\omega_i^{(n)}\coloneqq 1/|\mathcal{I}_n|$, where
  $|\mathcal{I}_n|$ denotes the cardinality of $\mathcal{I}_n$, in order to
  weigh uniformly the importance of the information carried by each
  hyperslab, $S_i[\epsilon_i]$. Other, more general, scenarios regarding
  the choice of $\{\omega_i^{(n)}\}_{i=1}^{|\mathcal{I}_n|}$ are also
  possible.

\item Collect the projections $P_{S_i[\epsilon_i]}(\bm{a}_n)$, $\forall
  i\in\mathcal{I}_n$ (see \eqref{project.hyperslab}).

\item Choose an $\varepsilon'\in (0,1]$, and let the
\textit{extrapolation parameter} $\mu_n$ take values from the interval
$[\varepsilon'\mathcal{M}_n, (2-\varepsilon')\mathcal{M}_n]$, where
\begin{subequations}\label{algo}
\begin{equation}
\mathcal{M}_n\coloneqq %
\begin{cases}
\frac{ \sum_{i\in\mathcal{I}_n} \omega_i^{(n)}
 \norm{P_{S_i[\epsilon_i]}(\bm{a}_n) - \bm{a}_n}^2} {
 \norm{\sum_{i\in\mathcal{I}_n} \omega_i^{(n)}
   P_{S_i[\epsilon_i]}(\bm{a}_n) - \bm{a}_n}^2}, & \\
& \hspace{-100pt}\text{if}\ \sum_{i\in\mathcal{I}_n} \omega_i^{(n)}
   P_{S_i[\epsilon_i]}(\bm{a}_n) \neq \bm{a}_n,\\
1, & \hspace{-100pt} \text{otherwise}. \label{Mn}
\end{cases}
\end{equation}
Notice that due to the convexity of the function $\norm{\cdot}^2$, we
always have $\mathcal{M}_n\geq 1$. As such, the parameter $\mu_n$ takes
values larger than or equal to $2$. In general, the larger the $\mu_n$, the
larger the convergence speed of the proposed algorithm.

\item Compute the next estimate by
\begin{equation}
\bm{a}_{n+1}\coloneqq
\begin{cases}
T_{\text{GT}}^{(K)} \Biggl(\bm{a}_n + \mu_n
\biggl(\sum\limits_{i\in\mathcal{I}_n} \omega_i^{(n)}
P_{S_i[\epsilon_i]}(\bm{a}_n) - \bm{a}_n\biggr)\Biggr), &
\\ & \hspace{-50pt}\text{if}\ \mathcal{I}_n\neq\emptyset,\\
T_{\text{GT}}^{(K)}(\bm{a}_n), & \hspace{-50pt} \text{if}\ \mathcal{I}_n =
\emptyset.
\end{cases}\label{main.recursion}
\end{equation}
\end{subequations}
\end{enumerate}
\end{algo}

In order to theoretically support the incorporation of GT into parameter
estimation schemes, a novel family of mappings, called the
\textit{partially quasi-nonexpansive mappings}, which, to the best of our
knowledge, appears for the first time in the related literature
\cite{bauschke.combettes.book}. The reasons for defining this new class of
mappings are: (i) this family includes as a special case the previously
defined generalized thresholding operator $T_{\text{GT}}^{(K)}$, and, thus,
it establishes a general theoretical framework for sparsity-promoting
mappings, (ii) it introduces sound theoretical tools, which help to attack
\textit{non-convexly} constrained learning problems, and (iii) it
generalizes the very recent results, obtained for the \textit{Adaptive
  Projected Subgradient Method (APSM)} \cite{Sy.qne.siopt}, to
\textit{non-convexly} constrained online learning tasks (see
App.~\ref{sec:analysis.algo}).

Although the following discussion can be naturally extended to general
Hilbert spaces, for the sake of simplicity we focus here on the Euclidean
space $\Real^L$, i.e., $T: \Real^L \rightarrow \Real^L$. A concept of
fundamental importance, associated to every mapping $T$, is its
\textit{fixed point set} $\Fix(T) \coloneqq \bigl\{\bm{a}\in \Real^L:
T(\bm{a})=\bm{a}\bigr\}$ \cite{GoebelKirk,
  bauschke.combettes.book}. In other words, $\Fix(T)$ reveals the hidden
modes of $T$, by putting together all those points unaffected by $T$. To
leave no place for ambiguity, every $\Fix(T)$ that appears in the sequel is
assumed nonempty.

\begin{definition}[The class of partially quasi-nonexpansive
    mappings]\label{def:pqne} A mapping $T$ is called \textit{partially
    quasi-nonexpansive,} if
\begin{equation}\label{pqne}
\begin{gathered}
\forall \bm{x} \in \Real^L, \exists Y_{\bm{x}}
  \subset \Fix(T): \forall \bm{y}\in Y_{\bm{x}},\\ \norm{T(\bm{x})-\bm{y}}
  \leq \norm{\bm{x}-\bm{y}}.
\end{gathered}
\end{equation}
The fixed point set $\Fix(T)$ is \textit{not}
  necessarily a convex set. Let us also define a stronger version of
  \eqref{pqne}; the mapping $T$ will be called \textit{strongly} or
  \textit{$\eta$-attracting partially quasi-nonexpansive mapping} if there
  exists an $\eta>0$ such that
\begin{equation} \label{spqne}
\begin{gathered}
\forall \bm{x} \in \Real^L, \exists Y_{\bm{x}} \subset \Fix(T):
\forall \bm{y}\in Y_{\bm{x}}, \\ \eta \norm{\bm{x} -T(\bm{x})}^2 \leq
\norm{\bm{x} - \bm{y}}^2 - \norm{T(\bm{x}) - \bm{y}}^2.
\end{gathered}
\end{equation}
\end{definition}

An example of such a mapping \eqref{spqne} is the novel generalized
thresholding mapping of Section \ref{sec:GT} (for a proof see
App.~\ref{sec:properties.T_K}). In App.~\ref{sec:properties.T_K}, we will
also verify that $\Fix(T_{\text{GT}}^{(K)})$ is a union of subspaces, which
is indeed a non-convex set. Recall that at the heart of any
sparsity-promoting learning method lies the search for a solution in a
union of subspaces \cite{Lu.Do.2008, eldar.michali.union.2009}. It must be
pointed out that a number of well-known mappings, e.g.,
\cite{bauschke.combettes.book, Combettes.Pesquet.proximal.2010,
  myyy.soft.thres.icassp10}, are special cases of the previously defined
class of partially quasi-nonexpansive ones.

The convergence analysis of the APGT is given by the following
Thm.~\ref{thm:algo}. This analysis is based on a set of deterministic
assumptions, given below. Since the APGT is based on the mapping
$T_{\text{GT}}^{(K)}$, whose fixed point set (see
App.~\ref{sec:properties.T_K}) is non-convex, this is the first time that
the results of \cite{YamadaOguraAPSMNFAO, KostasAPSMatNFAO, Sy.qne.siopt}
are generalized to \textit{non-convexly constrained online learning tasks.}

\begin{assumption}\label{assumptions}\mbox{}
\begin{enumerate}[leftmargin=0pt,itemindent=20pt]

\item\label{ass:Omega.n} Assume that $\exists n\in \Natural$ such that
  $\Omega_n \coloneqq M_{J_{\bm{a}_n}^{(K)}} \cap
  \bigcap_{i\in\mathcal{I}_n} S_i[\epsilon_i] \neq \emptyset$. Let us
  explain here the physical reasoning behind this assumption. Recall, here,
  that $\{S_i[\epsilon_i]\}_{i\in\mathcal{I}_n}$ is the set of all active
  hyperslabs (see Alg.~\ref{APGT}), at the time instant $n$. For an
  appropriate choice of the parameters $(\epsilon_n)_{n\in\Natural}$ (see
  \eqref{eq:Hyperslab}), the hyperslabs contain the desired $\bm{a}_*$ with
  high probability. Moreover, as time goes by, and due to a long sequence
  of projections in \eqref{algo}, the orbit $(\bm{a}_n)_{n\in\Natural}$ is
  attracted closer and closer to the hyperslabs; and as a consequence,
  closer to $\bm{a}_*$. For this reason, it is natural to expect that
  $\supp(\bm{a}_n)$ is similar to $\supp(\bm{a}_*)$, and hence
  $M_{J_{\bm{a}_n}^{(K)}}$ to $M_{J_{\bm{a}_*}^{(K)}}$, at some time
  $n$. Since $M_{J_{\bm{a}_*}^{(K)}}$ enjoys a non-empty intersection with
  $\bigcap_{i\in\mathcal{I}_n} S_i[\epsilon_i]$, with high probability, we
  anticipate that the same also happens to $M_{J_{\bm{a}_n}^{(K)}}$.

\item\label{ass:finite.Omega} Assume that there exists a time instant
  $n_0\in\Natural$, and an $N\in \Naturalstar$, such that
  $\bigcap_{n=n_0}^{n_0+N-1} \Omega_n \neq \emptyset$.

\item\label{ass:Omega} Assume that $\Omega \coloneqq
  \liminf_{n\rightarrow\infty}\Omega_n \coloneqq \bigcup_{n\geq 0}
  \bigcap_{m\geq n} \Omega_m \neq \emptyset$. In other words, we assume
  that the set of all points, which belong to all but a finite number of
  $\Omega_n$s, is nonempty.

\end{enumerate}
\end{assumption}

\begin{theorem}[Properties of the APGT]\label{thm:algo}\mbox{}

\begin{enumerate}[leftmargin=0pt,itemindent=20pt]

\item\label{simple.monotinicity} Let Assumption
  \ref{assumptions}.\ref{ass:Omega.n} hold true. Then, $d(\bm{a}_{n+1},
  \Omega_n) \leq d(\bm{a}_n, \Omega_n)$, where $d(\cdot,\Omega_n)$ stands
  for the (metric) distance function \cite{bauschke.combettes.book} to
  $\Omega_n$.

\item\label{bound} Let Assumption
  \ref{assumptions}.\ref{ass:finite.Omega} hold true. Then,
\begin{align*}
d^2 \Bigl(\bm{a}_{n_0+N}, \bigcap_{n=n_0}^{n_0+N-1} \Omega_n \Bigr) & \leq
d^2 \Bigl(\bm{a}_{n_0}, \bigcap_{n=n_0}^{n_0+N-1} \Omega_n \Bigr) \\
& \hspace{-70pt}- \frac{(\varepsilon')^2}{q} \sum_{n=n_0}^{n_0+N-1}
\max\bigl\{d^2(\bm{a}_n,S_j[\epsilon_j]): j\in\mathcal{J}_n\bigr\}.
\end{align*}
In other words, the previous inequality establishes a bound on the distance
of the estimates from a \textit{finite} intersection of the $\Omega_n$s. If
we assume, also, that there exists an estimate $\bm{a}_n$ which does not
belong to such an intersection, i.e., $\exists n'\in
\overline{n_0,n_0+N-1}$ such that
$\max\bigl\{d^2(\bm{a}_{n'},S_j[\epsilon_j]): j\in\mathcal{J}_{n'}\bigr\}
>0$, then the previous result claims that the APGT forces $\bm{a}_{n_0+N}$
to be located \textit{strictly} closer to $\bigcap_{n=n_0}^{n_0+N-1}
\Omega_n$ than $\bm{a}_{n_0}$ is.

\item Let Assumption
  \ref{assumptions}.\ref{ass:Omega} holds true. Then,

  \begin{enumerate}[leftmargin=0pt,itemindent=20pt]

  \item\label{thm:exist.cluster} the set of all cluster points of the
    sequence $(\bm{a}_n)_{n\in\Natural}$ is nonempty, i.e.,
    $\mathfrak{C}\bigl((\bm{a}_n)_{n\in\Natural} \bigr) \neq
    \emptyset$.

  \item\label{thm:distance.goes.2.zero}
    $\lim_{n\rightarrow\infty}d(\bm{a}_n,S_n[\epsilon_n]) = 0$. In other
    words, as the time advances, the orbit $(\bm{a}_n)_{n\in\Natural}$
    approaches $(S_n[\epsilon_n])_{n\in\Natural}$.

    \item\label{thm:inclusion.4.cluster.points}
      $\mathfrak{C}\bigl((\bm{a}_n)_{n\in\Natural}\bigr) \subset\Fix
      \bigl(T_{\text{GT}}^{(K)}\bigr) =
      \bigcup_{J\in\mathscr{T}(K,L)}M_J$. In words, the APGT generates a
      sequence of estimates $(\bm{a}_n)_{n\in\Natural}$, whose
      \textit{cluster points are sparse vectors,} of sparsity level no
      larger than $K$.

  \end{enumerate}
\end{enumerate}
\end{theorem}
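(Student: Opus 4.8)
The plan is to isolate the two distinct operations performed in each APGT update and analyze them separately: the \emph{extrapolated parallel projection} step, which produces the intermediate iterate $\bm{v}_n \coloneqq \bm{a}_n + \mu_n\bigl(\sum_{i\in\mathcal{I}_n}\omega_i^{(n)}P_{S_i[\epsilon_i]}(\bm{a}_n) - \bm{a}_n\bigr)$, and the thresholding step $\bm{a}_{n+1} = T_{\text{GT}}^{(K)}(\bm{v}_n)$. The backbone of all three parts is a single per-step \emph{quasi-Fej\'er} inequality: for a suitable anchor $\bm{y}\in\Omega_n$,
\[
\norm{\bm{a}_{n+1}-\bm{y}}^2 \leq \norm{\bm{a}_n-\bm{y}}^2 - \frac{(\varepsilon')^2}{q}\max\bigl\{d^2(\bm{a}_n,S_j[\epsilon_j]):j\in\mathcal{J}_n\bigr\}.
\]
For the projection step I would expand $\norm{\bm{v}_n-\bm{y}}^2$, use the variational inequality of the metric projection $\innprod{\bm{a}_n-P_{S_i[\epsilon_i]}(\bm{a}_n)}{\bm{y}-P_{S_i[\epsilon_i]}(\bm{a}_n)}\leq 0$ (valid since $\bm{y}\in\bigcap_{i\in\mathcal{I}_n}S_i[\epsilon_i]$) together with the definition of $\mathcal{M}_n$ to obtain $\norm{\bm{v}_n-\bm{y}}^2 \leq \norm{\bm{a}_n-\bm{y}}^2 - \mu_n(2\mathcal{M}_n-\mu_n)\norm{\bm{g}_n}^2$, where $\bm{g}_n$ is the aggregated update direction. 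The admissible range $\mu_n\in[\varepsilon'\mathcal{M}_n,(2-\varepsilon')\mathcal{M}_n]$ gives $\mu_n(2\mathcal{M}_n-\mu_n)\geq(\varepsilon')^2\mathcal{M}_n^2$, and since $\mathcal{M}_n\geq 1$, $\omega_i^{(n)}=1/|\mathcal{I}_n|$ with $|\mathcal{I}_n|\leq q$, and $d(\bm{a}_n,S_j)=0$ for inactive $j$, the subtracted term dominates $\frac{(\varepsilon')^2}{q}\max_j d^2(\bm{a}_n,S_j[\epsilon_j])$. The thresholding step is then absorbed by invoking that $T_{\text{GT}}^{(K)}$ is (strongly) partially quasi-nonexpansive, so $\norm{\bm{a}_{n+1}-\bm{y}}=\norm{T_{\text{GT}}^{(K)}(\bm{v}_n)-\bm{y}}\leq\norm{\bm{v}_n-\bm{y}}$.

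Granting the displayed inequality, the three statements follow quickly. For the monotonicity claim I would take $\bm{y}=P_{\Omega_n}(\bm{a}_n)$ (legitimate since $\Omega_n$ is closed and convex, being the intersection of a subspace with hyperslabs), so $\norm{\bm{a}_n-\bm{y}}=d(\bm{a}_n,\Omega_n)$, then discard the nonnegative subtracted term and bound $d(\bm{a}_{n+1},\Omega_n)\leq\norm{\bm{a}_{n+1}-\bm{y}}$. For the finite-horizon bound I would fix a common $\bm{y}\in\bigcap_{n=n_0}^{n_0+N-1}\Omega_n$, apply the inequality at each $n\in\overline{n_0,n_0+N-1}$, and telescope, replacing the initial term and the final term by the corresponding distances to the finite intersection.

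For the third part I would exploit quasi-Fej\'er monotonicity with respect to any point of $\Omega=\liminf_n\Omega_n$. Monotonicity of $\norm{\bm{a}_n-\bm{y}}$ along the tail yields boundedness of $(\bm{a}_n)_{n\in\Natural}$, hence a nonempty $\mathfrak{C}\bigl((\bm{a}_n)_{n\in\Natural}\bigr)$ by Bolzano--Weierstrass. Summing the per-step decrements gives $\sum_n\max_j d^2(\bm{a}_n,S_j[\epsilon_j])<\infty$, whence $d(\bm{a}_n,S_n[\epsilon_n])\to 0$ because $n\in\mathcal{J}_n$. The same summability, combined with the $\eta$-attracting inequality \eqref{spqne}, forces $\norm{\bm{v}_n-T_{\text{GT}}^{(K)}(\bm{v}_n)}\to0$ and $\norm{\bm{a}_n-\bm{v}_n}\to0$; then along a subsequence with $\bm{a}_{n_k}\to\bar{\bm{a}}$ one gets $\bm{v}_{n_k}\to\bar{\bm{a}}$ with vanishing thresholding displacement, and the strict-shrinkage property Def.~\ref{def:GT}.\ref{strictly.shrinks} forces every coordinate of $\bar{\bm{a}}$ outside a (subsequentially constant) top-$K$ support to vanish, giving $\bar{\bm{a}}\in\bigcup_{J}M_J=\Fix(T_{\text{GT}}^{(K)})$.

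The main obstacle is hidden in the phrase \emph{suitable anchor $\bm{y}\in\Omega_n$}. The operator $T_{\text{GT}}^{(K)}$ is only \emph{partially} quasi-nonexpansive: its non-expansiveness toward $\bm{y}$ at input $\bm{v}_n$ holds precisely when $\supp(\bm{y})\subseteq J_{\bm{v}_n}^{(K)}$, i.e.\ when $\bm{y}$ lies in the subspace $M_{J_{\bm{v}_n}^{(K)}}$ selected by the \emph{post-projection} iterate. The set $\Omega_n$, however, is built from $M_{J_{\bm{a}_n}^{(K)}}$, the subspace selected by the \emph{pre-projection} iterate, and the projection step can in principle reshuffle the $K$ largest entries. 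Reconciling these two subspaces---certifying that the anchor drawn from $\Omega_n$ is genuinely attracting for the thresholding step, which is exactly the point where the purely convex APSM arguments of \cite{YamadaOguraAPSMNFAO,KostasAPSMatNFAO,Sy.qne.siopt} break down---is where the careful work must go, and it is the reason the partially quasi-nonexpansive framework of Def.~\ref{def:pqne} is indispensable.
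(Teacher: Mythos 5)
Your proposal is, in structure and in every computation you carry out, the paper's own proof. The paper packages the extrapolated parallel-projection step as a subgradient projection of a convex function $\Theta_n$ built from the distances $d(\cdot,S_i[\epsilon_i])$, but the per-step decrement it extracts, $\lambda_n(2-\lambda_n)\,\Theta_n^2(\bm{a}_n)/\norm{\Theta_n'(\bm{a}_n)}^2$ with $\lambda_n=\mu_n/\mathcal{M}_n$, is algebraically identical to your $\mu_n(2\mathcal{M}_n-\mu_n)\norm{\bm{g}_n}^2$, and the ratio $\Theta_n^2(\bm{a}_n)/\norm{\Theta_n'(\bm{a}_n)}^2$ is bounded below by $\tfrac{1}{q}\max\bigl\{d^2(\bm{a}_n,S_j[\epsilon_j]):j\in\mathcal{J}_n\bigr\}$ by the same convexity-plus-cardinality argument you give. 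Your treatment of parts \ref{simple.monotinicity} and \ref{bound} (anchor, discard, telescope), of part \ref{thm:exist.cluster} (Fej\'er monotonicity and Bolzano--Weierstrass) and of part \ref{thm:distance.goes.2.zero} (summability/Cauchy) coincides with the paper's; your part \ref{thm:inclusion.4.cluster.points} simply re-derives inline the demiclosedness property that the paper isolates as Thm.~\ref{thm:properties.T_K}.\ref{thm:demiclosed} and cites.

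The genuine gap is that you flag the anchor-compatibility problem and then stop, whereas the ingredient that closes it is available and your proposal never invokes it: Thm.~\ref{thm:properties.T_K}.\ref{thm:same.support}, i.e., $J^{(K)}_{T_{\text{GT}}^{(K)}(\bm{x})}=J^{(K)}_{\bm{x}}$. Applied at $\bm{x}=\bm{v}_n$ it gives $M_{J^{(K)}_{\bm{v}_n}}=M_{J^{(K)}_{\bm{a}_{n+1}}}$: the subspace certifying the thresholding step at time $n$ is exactly the subspace defining $\Omega_{n+1}$. For part 3 this settles everything, because an anchor $\bm{v}\in\Omega=\liminf_n\Omega_n$ lies in $\Omega_{n+1}\subset M_{J^{(K)}_{\bm{a}_{n+1}}}=M_{J^{(K)}_{\bm{v}_n}}$ for all sufficiently large $n$, hence is a legitimate attractor for the GT step as well as for the projection step; this is precisely why the paper records the identity $J^{(K)}_{\bm{a}_{n+1}}=J^{(K)}_{T_{\Theta_n}^{(\lambda_n)}(\bm{a}_n)}$ immediately before deriving \eqref{I-Tgt.2.0}. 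For parts \ref{simple.monotinicity} and \ref{bound}, however, the mismatch you describe is not cured by this device: an anchor from $\Omega_n$ (or, at the last step of the window, from $\bigcap_{n=n_0}^{n_0+N-1}\Omega_n$) is certified only to lie in $M_{J^{(K)}_{\bm{a}_n}}$, while Thm.~\ref{thm:properties.T_K}.\ref{thm:T_K.pqne} needs it in $M_{J^{(K)}_{\bm{v}_n}}$, and a two-dimensional hard-thresholding example shows the desired inequality can fail for an anchor with the wrong support. The paper's own proof elides this point---it justifies the key inequality by writing ``$\bm{v}\in M_{J_{\bm{a}_n}^{(K)}}\subset\Fix(T_{\text{GT}}^{(K)})$'', which is weaker than what its own Thm.~\ref{thm:properties.T_K}.\ref{thm:T_K.pqne} requires---so on those two parts your proposal is at parity with the published argument; both are repaired by reading $\Omega_n$ as built from the support selected at the thresholding input (equivalently from $J^{(K)}_{\bm{a}_{n+1}}$), after which your derivation goes through verbatim.
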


\begin{proof}
See Appendix \ref{sec:analysis.algo}.
\end{proof}

\section{Numerical Experiments}\label{sec:simulations}

In this section, our main intention is to provide the proof of concept of
the theoretical findings presented in Thm.~\ref{thm:algo}. This is realized
via the performance evaluation of \eqref{main.recursion}, where the
shrinkage function $\shr$, in Def.~\ref{def:GT}, assumes any form of
$T_{\text{PLSTO}}^{(p)}$, defined in \eqref{def.plsto}. This study is not
meant to be exhaustive, and in order to demonstrate the potential of the
proposed technique, the hard thresholding (HT) as well as the PLSTOs
corresponding to the SCAD \cite{Fan_Li2001} and the $\ell_{\gamma}$ penalty
($\gamma<1$) \cite{Frank1993} are examined, since they exhibit distinct
characteristics, as it is illustrated in
Figs.~\ref{fig:variousthresholds1}b and \ref{fig:variousthresholds1}e,
respectively. Notice that the associated penalty functions are
non-convex. The resulting thresholding rules are called the \textit{SCAD}
and the \textit{Bridge Thresholding (BT)}, respectively. Notice, also, that
SCAD is a piece-wise linear thresholding operator, whereas, the BT exhibits
strong discontinuity and non-linearity. 

In order to comply with the theory, the SCAD, the BT, and the HT are used
as shrinkage functions $\shr$ in Def.~\ref{def:GT}, for all the components
$x_i$ with index $i \notin J_{\bm{x}}^{(K)}$, where $K$ stands for an
estimate of the true $K_* \coloneqq \norm{\bm{a}_*}_0$. To this end, we
have slightly modified the classical SCAD, BT, and HT rules in order to fit
our need to keep a number of $K$ components of a vector intact. As such,
the SCAD thresholding operates according to the following rule; given the
input $\bm{x}\in \Real^L$ and the output vector $\bm{z}:=
T_{\text{SCAD}}(\bm{x})$, the $i$-th coordinate of $\bm{z}$, where $i\notin
J_{\bm{x}}^{(K)}$, is given by the next rule:
\begin{equation}
\label{eq:SCAD}
z_i=
\begin{cases}
0, & \text{if}\ |x_i| \le \lambda, \\
\sign(x_i)(|x_i|-\lambda - \delta)_+, & \text{if}\ |x_i| \in
\bigl(\lambda, 2\lambda \bigr], \\
\sign(x_i) \bigl(\frac{(\alpha-1)|x_i|-\alpha \lambda}{\alpha -2} -
  \delta\bigr)_+, & \\
& \hspace{-60pt}\text{if}\ |x_i| \in \bigl(2\lambda,
\min\bigl\{\xi_{\bm{x}}^{(K)},\alpha \lambda\bigr\}\bigr],
\end{cases}
\end{equation}
where $\lambda$ is the regularization parameter, which appears in the
definition of the PLSTO in \eqref{def.plsto}, $\alpha$ is a user-defined
parameter, inherent to SCAD \cite{Fan_Li2001}, $\delta>0$ is a sufficiently
small user-defined parameter motivated by
Definition~\ref{def:GT}.\ref{strictly.shrinks}, and $(\cdot)_+ \coloneqq
\max\{0,\cdot\}$, introduced here in order to leave no place for
ambiguities. Our modification on the classical SCAD can be seen by the
introduction of $\delta$, $\xi_{\bm{x}}^{(K)}$, and $(\cdot)_+$.

Similarly, given the classical version of the BT rule
\cite{Antoniadis2001}, our modified BT is given as follows by involving the
quantity $\xi_{\bm{x}}^{(K)}$ in the computations: $\forall
i\notin J_{\bm{x}}^{(K)}$, 
\begin{equation}
z_i=
\begin{cases}
\sign(x_i)(\bar{z}_i -\delta)_+, & \\
& \hspace{-70pt} \text{if}\ \min
\bigl\{c_{\text{BT}}(\lambda, \gamma), \xi_{\bm{x}}^{(K)} \bigr\} \leq
|x_i| \leq  \xi_{\bm{x}}^{(K)},\\ 
0, & \hspace{-70pt} \text{otherwise},
\end{cases}
\end{equation}
where $\lambda$ is the corresponding regularization parameter in
\eqref{def.plsto}, $\gamma\in (0,1)$ is a user-defined parameter, and
\begin{equation*}
c_{\text{BT}}(\lambda, \gamma) \coloneqq \left(-\frac{1}{\lambda
  \gamma(\gamma-1)}\right)^{\frac{1}{\gamma-2}} \hspace{-10pt} + \lambda
\gamma \left(-\frac{1}{\lambda
  \gamma(\gamma-1)}\right)^{\frac{\gamma-1}{\gamma-2}}.
\end{equation*}
The term $\bar{z}_i$ stands for the solution of the equation $\bar{z}_i +
\sign(z_i) \lambda \gamma \bar{z}_i^{\gamma-1}=|x_i|$. When $\gamma$ is set
equal to $0.5$, $\bar{z}_i$ is obtained in closed form by solving a third
order polynomial equation. Similarly, HT is given by the
following rule; $\forall i\notin J_{\bm{x}}^{(K)}$,
\begin{equation*}
z_i= \begin{cases}
0, & \text{if}\ |x_i| \leq \min\bigl\{ \lambda, \xi_{\bm{x}}^{(K)}
\bigr\},\\  
\sign(x_i)(|x_i| - \delta)_+, & \text{otherwise},
\end{cases}
\end{equation*}
where the $\lambda$ is introduced here in order to be compliant also to a
definition of the HT used often in the literature (see the discussion in
Appendix~\ref{sec:PLSTO}).

In the following experiments, unless otherwise stated, the signal under
consideration has $L=1024$ and $K_* = 100$. Moreover, the classical CS
signal recovery problem is considered, where the input (sensing) vectors
have independent components drawn from a normal distribution
$\mathcal{N}(0,1)$, and the observations are corrupted by additive white
Gaussian noise of variance $\sigma^2=0.1$. Regarding APGT, the
extrapolation parameter $\mu_n$ is set equal to $\mathcal{M}_n$, and the
hyperslab parameter $\epsilon_n\coloneqq 1.3\sigma$, $\forall n$. In this
paper, for all the techniques employed, configurations leading to the
fastest convergence rate are of principal interest. From this perspective,
unless otherwise stated, $q$ is fixed to $390$ since this appeared to be
the lowest $q$ value leading to enhanced convergence speed for the specific
$L$ and $K$ values. It should be stressed out that the method is not
sensitive to the parameter $q$. A larger $q$ value would only add to
computational complexity without any significant contribution to
performance. An extensive and complementary experimental study of the APGT
performance, in the case where $q$ is confined to small values, which
relates to very low computational complexity techniques, can be found in
\cite{Kopsinis.GT.low.complexity, KopsinisISCAS2013}. In all of the
succeeding figures, the MSE stands for $\text{MSE}_n \coloneqq
\frac{1}{\tau L} \sum_{i=1}^{\tau} \norm{\bm{a}_* - \bm{a}_n(i)}^2$, where
$(\bm{a}_n(i))_{n\in\Natural}$ is the sequence generated by the $i$-th
realization of Alg.~\ref{APGT}, and $\tau\coloneqq 100$ is the number of
independent realizations in order to smooth out the obtained performance
curves.

\subsection{Employing time-invariant thresholding operators}
\label{sec:performance_fixedPLSTO}

By the modifier ``time-invariant'', we mean that the user-defined parameter
$\lambda$ in \eqref{def.plsto} remains fixed for all the time instants
$n\in \Natural$. The performance of all the employed methods is given in
Fig.~\ref{fig:constthres}. In all cases, $K \coloneqq K_*$. The
regularization parameter $\lambda$ was optimized leading to the values
shown in the corresponding figure legend. Moreover, APGT-SCAD, without
being considerably sensitive to parameter $\alpha$, appeared to perform
best when adopting the relatively large value $\alpha=12$.

\begin{figure}[!th]
\centering
\subfloat[Time-invariant thresholding, i.e., fixed $\lambda$.]
         {\includegraphics[width=0.45\textwidth]{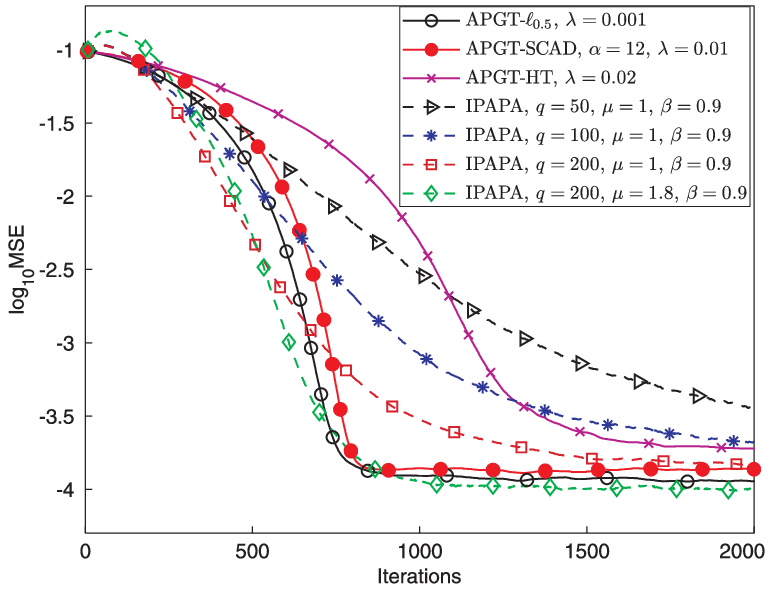}
\label{fig:constthres}}\hfil
\subfloat[Time-adaptive
  thresholding, i.e., time-varying
  $\lambda$.]{\includegraphics[width=0.45\textwidth]
  {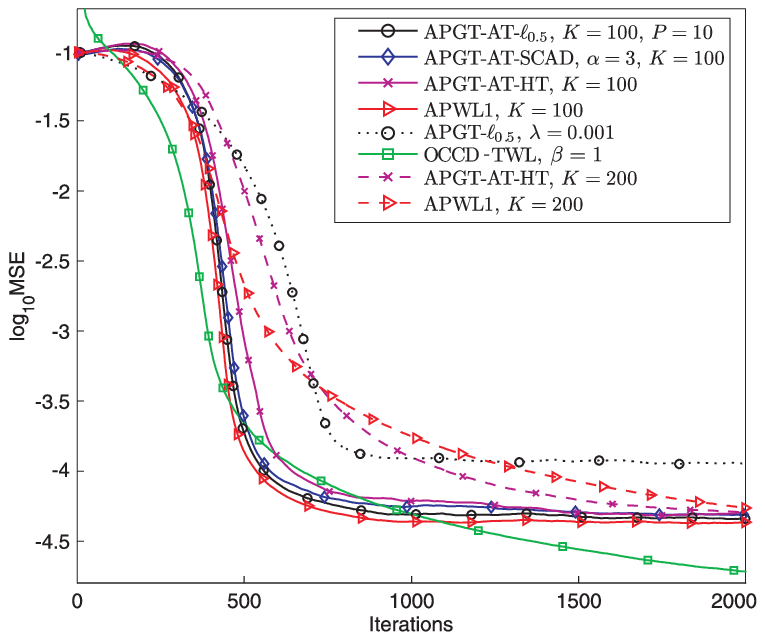}
                \label{fig:adaptthres390}}
        \caption{(a) Performance study of APGT using thresholding operators
          which are fixed in each iteration and comparison with IPAPA
          algorithm. (b) Performance study of APGT using thresholding
          operators which are changing in each iteration and comparison
          with LASSO solution.}
\end{figure}

For comparison, the \textit{Improved Proportionate Adaptive Projection
  Algorithm (IPAPA),} described in \cite{Hoshuyama_ipapa_2004,
  paleologu2010efficient}, is employed. The \textit{projection order} of
the IPAPA, which plays a similar role to $q$, and therefore the same
notation is used, is the major factor which dictates its
performance. Dashed curves indicated with triangles, stars and squares
correspond to values of $q$ equal to $50$, $100$, and $200$,
respectively. The step parameter of the IPAPA is denoted by $\mu$. The best
IPAPA performance, i.e., the one depicted with a dashed curve with
diamonds, is achieved with $q=200$ and $\mu=1.8$. For lower $q$ values,
such a large $\mu$ led to unstable performance. In all cases, the parameter
$\beta$, which tunes the weights in the proportionate
algorithm\footnote{See parameter $\alpha$ in $(2)$ of
  \cite{paleologu2010efficient}. We call it here $\beta$ in order to avoid
  confusion with the parameter $\alpha$ of SCAD.}, was given the large
value $\beta=0.9$ in order to exhibit enhanced sparsity promoting
behavior. When larger $q$ values are used, e.g., $q=400$, the performance
turned to become somewhat faster, but with a quite elevated steady-state
error floor, so the corresponding performance curves are not
shown. Moreover, a set-membership counterpart of IPAPA \cite{Werner.Diniz}
was also examined. This algorithm performed similarly to IPAPA, so the
results are not shown to ease visualization. It is clear that the
APGT-$\ell_{0.5}$ performs as well as IPAPA. However, this is achieved
under a significantly lower computational burden, as will be discussed in
Section~\ref{sec:complexity}.

\subsection{Employing time-adaptive thresholding operators}
\label{sec:performance_adaptPLSTO}

In the previous section, the exact shape of the thresholding function was
determined in advance using \textit{fixed} values for the associated
parameters, e.g., $\lambda$, $\gamma$, $\alpha$, etc. This is quite
limiting, since the proposed technique has the potential to incorporate
time-adaptive a-priori information, in the form of time-varying
thresholding operators. This section demonstrates that exploiting this
freedom leads APGT to enhanced performance. In particular, $\lambda$ in
\eqref{def.plsto} changes as time $n$ advances. In order to explicitly
describe this dependency of $\lambda$ to $n$, we will use hereafter the
notation $\lambda_n$. Assuming that an estimate $K$ of the true sparsity
level $K_*$ is available at each iteration $n$, parameter $\lambda_n$ is
properly tuned in order to guarantee that after thresholding, a fixed
number of components will be set equal to zero. With respect to the HT
operator, in order to achieve a sparsity level equal to $K$, i.e., $L-K$
components are zero, the quantity $\lambda_n$ should be set equal to
$\xi_{\bm{a}_n}^{(K)}$, $\forall n$. For the SCAD case, $\lambda_n
\coloneqq \frac{1}{\alpha} \xi_{\bm{a}_n}^{(K)}$, $\forall n$,
(refer to \eqref{eq:SCAD}). In this way, the SCAD shrinkage behavior is
preserved and tuned by the user-defined parameter $\alpha$. In a similar
manner, an adaptive BT can be built. Going even further, apart from the $K$
larger in magnitude components which remain unaltered, the next, say $P$,
smaller in magnitude components could be shrunk according to the bridge
rule. This is achieved if we notice that, by definition,
$\xi_{\bm{a}_n}^{(K+P)} \leq \xi_{\bm{a}_n}^{(K)}$, $\forall P\in
\overline{1,L-K}$, and that the parameter $\lambda_n$ is defined here as
the solution of the following equation $\xi_{\bm{a}_n}^{(K+P)} =
c_{\text{BT}}(\lambda_n, \gamma)$. In particular, for $\gamma=0.5$, this
solution obtains a closed form:
\begin{equation}
\label{eq:bridgelambda_n}
\lambda_n = 4\left(\frac{\xi_{\bm{a}_n}^{(K+P)}}{3}\right)^{\frac{3}{2}},
\quad \forall n.
\end{equation}
For convenience, the full GT operator involving the $\ell_{0.5}$ shrinkage
is given next: $\forall i\notin J_{\bm{x}}^{(K)}$,
\begin{equation}
z_i=
\begin{cases}
0, & \text{if}\ |x_i| \le \xi_{\bm{a}_n}^{(K+P)}, \\
\sign(x_i)(\bar{z}_i -\delta)_+, &
\text{if}\ \xi_{\bm{a}_n}^{(K+P)} < |x_i| \le \xi_{\bm{a}_n}^{(K)},
\end{cases}
\end{equation}
where $\bar{z}_i$ satisfies $\bar{z}_i+\frac{1}{2 \sqrt{\bar{z}_i}}
\lambda_n \sign(z_i)=|x_i|$, and $\lambda_n$ is given by
\eqref{eq:bridgelambda_n}.

The performance of APGT methods, using the previous time-adaptive
thresholding strategy, hereafter abbreviated as APGT-AT, is shown in
Fig.~\ref{fig:adaptthres390}. For reference, the dotted curve marked with
open circles is the one from Fig. \ref{fig:constthres} corresponding to the
best APGT method with a fixed $\lambda$. Moreover, the best results for the
APGT-AT-$\ell_{0.5}$ are obtained when $P$ assumes a small integer value,
such as $10$. A conclusion that can be easily drawn is that the
incorporation of adaptive thresholding led to a performance
boost. Moreover, the performance achieved depends on the thresholding
operator that is adopted, with the BT leading to somewhat faster
convergence speed compared to SCAD and HT. The performance of APWL1,
proposed in \cite{Kopsinis.Slavakis.TheodoridisIEEESP2011}, is also shown
with solid line marked with triangles. It appears that the newly proposed
algorithms, and especially APGT-AT-$\ell_{0.5}$, succeeds in achieving a
similar convergence behavior and speed compared to APWL1 and, as it will be
discussed in Section~\ref{sec:complexity}, with half the computational
complexity. For completeness, the \textit{Online Cyclic Coordinate Descent
  - Time Weighted Lasso (OCCD-TWL),} presented in
\cite{Angelosantejournal2010}, is depicted with solid line marked with
squares. The latter is an online algorithm approximating the LASSO problem
solution. It is observed, that APGT ($q=390$), demonstrates a performance
competitive to OCCD-TWL, which is an $\mathcal{O}(L^2)$ complexity
algorithm.

The advantages of the APGT algorithm over the APWL1 are not limited to the
performance improvements and/or to computational complexity savings. The
proposed theoretical framework is general enough in order to include other
thresholding operators as well, either existing or newly defined. However,
the scope of this paper is not a simulation study of all these alternatives
of thresholding, and such a route will be studied elsewhere. For example,
in \cite{Kopsinis.GT.low.complexity}, implementations of the proposed
scheme driven by a different set of PLSTOs, suitable for low complexity
operation, and a novel specially customized thresholding operator are
presented. In that case, comparison with linear complexity sparsity
inducing algorithms, such as the Reweighted Zero Attracting-Least Mean
Square (RZA-LMS) \cite{ChenHero09}, $\ell_0$-LMS
\cite{Su_L0_LMS_performance_IEEESP2012}, and the Sparse Adaptive Orthogonal
Matching Pursuit (SpAdOMP) \cite{mileounisspadomp} is made in more advanced
scenarios, such as system identification with correlated input signal (see
\cite{Kopsinis.GT.low.complexity}) and sparse signal estimation corrupted
by non-symmetric and/or impulsive noise.

\subsection{Robustness against inaccurate sparsity level estimates}
\label{sec:robustness}

With the aid of Fig.~\ref{fig:wrongsparsity20}, the effect of over- and
under-estimation of $K_*$ is discussed for the reduced complexity case of
$q=20$. We choose a low value for $q$, since we noticed that such a
scenario reveals more distinctly the performance sensitivity and related
behavior of the APGT with over- or under-estimations of $K_*$. Moreover,
the use of a low value of $q$, reveals the performance advantages of the
GT, compared to other linear complexity algorithms, such as the
$\ell_0$-LMS \cite{Su_L0_LMS_performance_IEEESP2012}. As it is seen from
the Fig.~\ref{fig:wrongsparsity20}, the use of the GT mapping results in
enhanced performance w.r.t.\ both APWL1 and $\ell_0$-LMS, where the latter
was fine-tuned for best convergence speed/error floor trade off. In order
to have a reference of the performance achieved when the true sparsity
level is given, the APGT-AT-$\ell_{0.5}$ with $K=100$, is also provided in
Fig.~\ref{fig:wrongsparsity20}. Let us start with the under-estimation case
and assume that $K=80$, i.e., $20\%$ lower compared to the true sparsity
level. Let us take, for example, the APGT-AT-SCAD curve, which shows an
elevated error floor. Notice that the case of under-estimations of $K_*$ is
not supported theoretically by Thm.~\ref{thm:algo}. With respect to
over-estimation, APGT is shown to be very robust. For example, let us see
the case where $K_*$ is over-estimated by $100\%$, i.e., $K\coloneqq
2K_*$. The performance achieved by APGT-AT-$\ell_{0.5}$ (solid line with
open circles) is still much better compared to the APWL1, even if APWL1
uses an accurate estimate for the $K_*$. Moreover, the degradation resulted
from such a large over-estimation appears to be limited. Remarkably, in
this low $q$ case, both APGT-AT-HT and APGT-AT-SCAD, drawn with solid lines
marked with x-crosses and diamonds, respectively, have benefited from the
over-estimation. The reason for this is that when $q$ is small, the
tentative estimates of the unknown vector in each iteration are likely to
be not accurate enough in order for the $K_*$ larger of them to reveal the
true support of the vector. An over-estimated $K_*$ leads to less strict HT
and SCAD thresholding operators, which allow components that would
otherwise be set equal to zero, to survive. All the results above have been
confirmed with higher levels of over-estimation.

\begin{figure}[!th]
\centering
\subfloat[Robustness against erroneous estimates of the sparsity
  level.]{\includegraphics[width=0.45\textwidth]{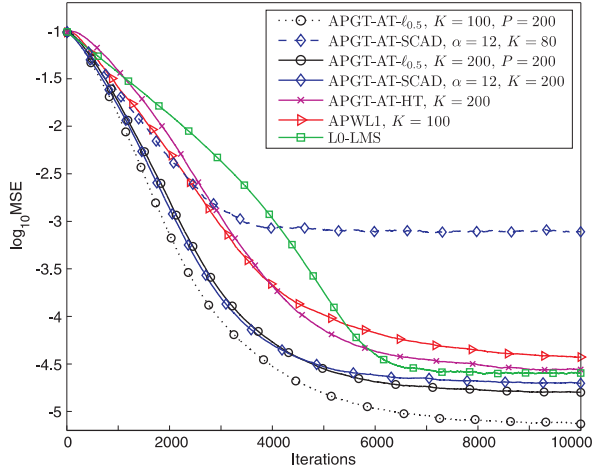}
 \label{fig:wrongsparsity20}}\hfil
\subfloat[Robustness against time variations of the desired solution.]
{\includegraphics[width=0.45\textwidth]{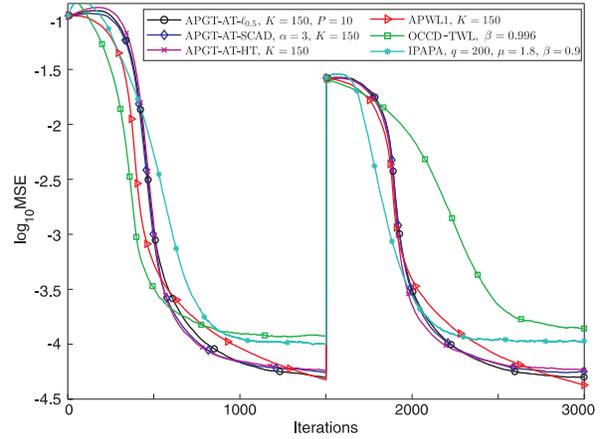}
\label{fig:timevar}}
\caption{(a) Robustness of APGT-AT in the cases of an under-estimation and
  an over-estimation of $K_*=100$, i.e., $K=80$ and $K=200$,
  respectively. The $q=20$ in these experiments. (b) The unknown
  vector has a sparse wavelet representation which changes abruptly after
  the reception of 1500 observations.}
\end{figure}

The results are similar when the algorithms operate with higher complexity,
i.e., $q=390$, with the difference that the performance of APGT-AT-HT and
APGT-AT-SCAD does not benefited as much as previously by an over-estimation
of $K$. The APGT-AT-SCAD and APGT-AT-$\ell_{0.5}$ perform similarly, so the
corresponding curves are not shown. A thorough examination of several
scenarios, in the case where $q$ attains low values, is deferred to a
future work.

\subsection{Tracking ability of the APGT} \label{sec:timevar}

Fig.~\ref{fig:timevar} shows the ability of the tested algorithms to track
an abrupt change of the unknown vector $\bm{a}_*$, which is realized here
after $1500$ observations is examined. This is a typical setting used in
 adaptive filtering \cite{Haykin} community to study the tracking
agility of an algorithm. Here, in order to give an essence from the CS
paradigm, we consider the vector to be not sparse itself but to have a
sparse wavelet representation. In the first half, the signal under
consideration is of length $L=1024$, with $K_*=100$ non zero wavelet
coefficients. However, at the $1500$ time instant, ten randomly selected
wavelet coefficients change their values from $0$ to a randomly selected
nonzero one. Since the sparsity level of the signal changes (from $100$ to
$110$, at most) and it is not possible to know $K_*$ exactly in advance,
taking into account that the methods we propose are quite robust to $K_*$
over-estimations, we set $K = 150$ throughout the whole
experiment. Moreover, $q$ is set to $390$.

For the OCCD-TWL, an RLS-like forgetting factor lower than $1$ is adopted, in
order to succeed in re-estimating the unknown signal after the abrupt
change. More specifically, the value of $0.996$ appeared to offer a good
trade-off between convergence speed and steady-state error floor. However,
the OCCD-TWL convergence speed slows down after the $1500$ time instant,
something which was observed and discussed in
\cite{Kopsinis.Slavakis.TheodoridisIEEESP2011} as well. The IPAPA method,
catches up quickly after the abrupt change; however, the attained error
floor is higher than that of the APGT.

\subsection{Computational complexity}\label{sec:complexity}

The choice of the thresholding operator affects significantly the overall
computational burden for two reasons. First, the thresholding function
itself requires a larger or smaller number of mathematical operations
depending on the specific thresholding rule. Such operations can be
multiplications, divisions, as well as sorting operations. Additions are
ignored since they are considered to be much less costly. A second
attribute of the thresholding rule, which affects complexity, is whether
its outcome is a sparse vector with a certain sparsity level or
not. Indeed, if the thresholding operator produces vectors which are, say,
$\overline{K}$-sparse, then projections in APGT involve inner products with
sparse vectors where the number of required multiplications equal to
$\overline{K}$ instead of $L$. The HT and the GT with Bridge-$\ell_{0.5}$
shrinkage function, as they where presented in
\ref{sec:performance_adaptPLSTO}, belong to this category with
$\overline{K}=K$ and $\overline{K}=K+P$, respectively. The SCAD
thresholding rule does not guarantee a fixed number of zeros after its
application. This is also the case of the APWL1
\cite{Kopsinis.Slavakis.TheodoridisIEEESP2011}. Moreover, in the case of
the APWL1, exact projections onto the weighted $\ell_1$-ball need to be
computed, and in order to do so, the sorting of a vector is necessary,
which requires in general $\mathcal{O}(L\log_2L)$ operations. However, by
adopting a divide-and-conquer approach, as in \cite{Duchi08}, one might
reduce the above computational complexity down to $O(L)$ operations.

The worst-case computational complexities of all the methods employed are
given in Table~\ref{table:complexities.theory}. The parameter $e_1$ is
either 1 or 2, depending on whether all $\omega_i^{(n)}$ of the APGT are
given the same value or not. In the examples of this paper the former is
the case, i.e., $e_1=1$. Moreover, parameter $e_2$ is either $1$, if the
$\ell_2$ norm of the input vectors $(\bm{u}_n)_{n\in\Natural}$ is
arbitrary, or $0$, if it is normalized to unity.

\begin{table*}[t]
\footnotesize\centering
\begin{tabular}{lcccc}
\toprule
\multicolumn{1}{c}{Methods} & \multicolumn{4}{c}{Operations}\\
\midrule
 & Multiplications & Divisions & Sortings & Powers \\
\midrule
APGT-AT-HT & $(qe_1 + e_2 + 1)L+ (K+e_1+1)q$ & $e_2+1$
& $\mathcal{O}(L)$ & - \\
APGT-AT-$\ell_{0.5}$ & $(qe_1 + e_2 + 1)L+ (K+P+e_1+1)q
+12P+1$ & $P+e_2+2$ & $\mathcal{O}(L)$ & $3P+1$ \\
APGT-AT-SCAD & $(qe_1 + e_2 + 1)L+ (L+e_1+1)q +
  (L-K)$ & $L-K+e_2+1$ & $\mathcal{O}(L)$ & - \\
APWL1 & $(qe_1 + e_2 + 1)L+ (L+e_1+1)q +3L$ &
$2L+e_2+1$ & $\mathcal{O}(L)$ & -  \\
OCCD-TWL & $3L^2+3L$ & $L$ & - & - \\
IPAPA & $\mathcal{O}(q^3)+ (q^2+3q+1)L + q$ & - & - & - \\
\bottomrule
\end{tabular}
\caption{Computational complexities of all the methods
  employed.}\label{table:complexities.theory}
\end{table*}

\section{Conclusions}

The present paper contributed to sparsity-aware online learning tasks in
the following three ways: (i) it established a Generalized Thresholding
(GT) mapping, which can incorporate as a shrinkage function the majority of
the thresholding rules found in the literature, (iii) it proposed a
non-convexly constrained, online learning algorithm for sparse signal
recovery tasks with a computational complexity which scales linearly to the
number of unknowns, and (iii) it introduced a family of mappings which
serves as the wide functional analytic stage for the study of the previous
GT operator. Rigorous discussions on the properties of all the previous
functional analytic tools, as well as a convergence analysis of the
proposed algorithm were provided. To validate the theoretical findings
regarding our algorithm, extensive experiments were conducted, which showed
that the proposed methodology offers a sound theoretical, and very
competitive time-adaptive technique, with lower computational complexity
than several of the state-of-the-art, sparsity-promoting, online learning
algorithms.

\appendix

\section{Convex Sets, Convex Functions, and Projection
  Mappings}\label{sec:convex.analysis}

A subset $C$ of $\Real^L$ will be called convex, if for any $\bm{a},
\bm{a}' \in C$, the line segment $\{\lambda \bm{a} +
(1-\lambda)\bm{a}': \lambda\in [0,1] \}$ lies in $C$. A function
$\Theta:\Real^L \rightarrow\Real$ is called convex if $\forall \bm{a},
\bm{a}'\in \Real^L$, and $\forall \lambda\in[0,1]$, we have
$\Theta\bigl(\lambda\bm{a} + (1-\lambda)\bm{a}'\bigr) \leq
\lambda\Theta(\bm{a}) + (1-\lambda)\Theta(\bm{a}')$. The
\textit{$0$-th level set} of the convex $\Theta$ is defined as
$\lev{0}(\Theta) \coloneqq \bigl\{\bm{a}\in\Real^L: \Theta(\bm{a})
\leq 0\bigr\}$. A subgradient of the convex function $\Theta$ at a
point $\bm{a}$, denoted as $\Theta'(\bm{a})$, is an $L$-dimensional
vector such that $(\bm{v}-\bm{a})^\top \Theta'(\bm{a}) + \Theta(\bm{a})
\leq \Theta(\bm{v})$, $\forall \bm{v}\in \Real^L$. In general, the
number of the subgradients of $\Theta$ at $\bm{a}$ is infinite. The
set of all subgradients of $\Theta$ at a point $\bm{a}$ is called
subdifferential, and it is denoted by $\partial\Theta(\bm{a})$. In the
case where $\Theta$ is differentiable at $\bm{a}$, then the
subgradient $\Theta'(\bm{a})$ is unique, and it is nothing but the
gradient of $\Theta$ at $\bm{a}$.

Given a closed convex $C\subset\Real^L$, define the
\textit{(metric) distance function $d(\cdot,C):
  \Real^L\rightarrow\Real$ to $C$}\/ as follows: $\forall
\bm{a}\in\Real^L$, $d(\bm{a},C)\coloneqq \inf\{\norm{\bm{a}-
  \bm{v}}:\ \bm{v}\in C\}$. Notice that $d(\cdot,C)$ is convex with
$\lev{0}d(\cdot,C)=C$. The \textit{(metric) projection onto $C$}\/ is
defined as the mapping $P_C: \Real^L \rightarrow C$, which maps an
$\bm{a}\in\Real^L$ to the \textit{unique} $P_C(\bm{a})\in C$, such that
$\norm{\bm{a}-P_C(\bm{a})} = d(\bm{a},C)$. For example, the subdifferential of
$d(\cdot,C)$ is given as follows:
\begin{equation}
\partial d(\bm{a},C) = 
\begin{cases}
 N_C(\bm{a}) \cap B[0,1], & \text{if}\ \bm{a}\in C,\\ 
\frac{\bm{a}-P_C(\bm{a})}{d(\bm{a},C)}, & \text{if}\ \bm{a}\notin C, 
\end{cases} \label{subdiff.distance}
\end{equation}
where $N_C(\bm{a})\coloneqq\{\bm{v}\in\Real^L: \bm{v}^\top (\bm{y}-
\bm{a}) \leq 0, \forall \bm{y}\in C\}$.

\section{The Penalized Least-Squares Task}\label{sec:PLSTO}

Going back to \eqref{eq:regression_Model2}, choose $N\in \Naturalstar$, and
define $\bm{U}_n \coloneqq [\bm{u}_n, \bm{u}_{n-1}, \ldots,
  \bm{u}_{n-N+1}]\in \Real^{L\times N}$, as well as $\bm{y}_n \coloneqq
[y_n, y_{n-1}, \ldots, y_{n-N+1}]^{\top}\in\Real^N$, and $\bm{v}_n \coloneqq
[v_n, v_{n-1}, \ldots, v_{n-N+1}]^{\top} \in \Real^N$. Then, it can be easily
verified that \eqref{eq:regression_Model2} takes the form of $\bm{y}_n =
\bm{U}_n^{\top} \bm{a}_* + \bm{v}_n$, $\forall n\in\Natural$. The mainstream of
the batch sparsity-promoting algorithms utilize all the gathered $N$
training data to find an exact or approximate solution, in most cases
iteratively, to the following \textit{penalized least-squares} minimization
task,
\begin{equation}
\min_{\bm{a}\in \Real^L} \frac{1}{2}\norm{\bm{y}_n - \bm{U}_n^{\top}
  \bm{a}}^2 + \lambda \sum_{i=1}^Lp(|a_i|), \label{main.batch.task}
\end{equation}
where $p:\Real\rightarrow [0,\infty)$ stands for a sparsity-promoting and
  non-convex, in general, \textit{penalty} function, $\lambda\in (0,\infty)$ is the
  \textit{regularization} parameter, and $a_i$ stands for the $i$-th
  coordinate of the vector $\bm{a}$.

Choices for $p$ are numerous; if, for example, $p(|a|)
\coloneqq\chi_{\Real\setminus\{0\}}(|a|)$, $\forall a\in \Real$, where
$\chi_{\mathscr{A}}$ stands for the characteristic function with respect to
$\mathscr{A} \subset \Real$, i.e., $\chi_{\mathscr{A}}(\alpha) \coloneqq
1$, if $\alpha\in \mathscr{A}$, and $\chi_{\mathscr{A}}(\alpha) \coloneqq
0$, if $\alpha\notin \mathscr{A}$, then the regularization term
$\sum_{i=1}^Lp(|a_i|)$ becomes the $\ell_0$-norm of $\bm{a}$. In the case
where $p(|a|) \coloneqq |a|$, $\forall a\in\Real$, then the regularization
term is nothing but the $\ell_1$-norm $\norm{a}_1 \coloneqq
\sum_{i=1}^L|a_i|$, and the task \eqref{main.batch.task} becomes the
celebrated LASSO \cite{Tibshirani.Lasso}. However, it has been observed
that if some of the LASSO's regularity conditions are violated, then
LASSO is sub-optimal for model selection \cite{Zou2006, ZouLi2008,
  Zhang2010, Mazumder2011, Frank1993}. Such a behavior has motivated the
search for \textit{non-convex} penalty functions $p$, which bridge the gap
between the $\ell_0$- and $\ell_1$-norm; for example, the $\ell_{\gamma}$
penalty, for $\gamma\in (0,1)$, \cite{Frank1993}, the $\log$
\cite{Antoniadis2001}, the SCAD \cite{Fan_Li2001, Antoniadis2001}, the MC+
\cite{Zhang2010, Mazumder2011}, and the transformed $\ell_1$
\cite{Antoniadis2001} penalties.

Recently, sparsity-promoting coordinate-wise optimization techniques for
solving the task \eqref{main.batch.task} are attracting a lot of interest
\cite{Mazumder2011, SimonEtal2011, Angelosantejournal2010}. To be more
concrete, assume, for example, that $N=L$, and that the matrix $\bm{U}_n$ is
orthogonal. Byy defining $\tilde{\bm{a}}_n \coloneqq \bm{U}_n\bm{y}_n$,
\eqref{main.batch.task} can be equivalently viewed as the following
separable optimization task \cite{Antoniadis2001, Antoniadis2007},
\begin{equation}
\min_{\bm{a}\in \Real^L} \sum_{i=1}^L \frac{1}{2\lambda}
\left(\tilde{a}_i - a_i\right)^2 +
p(|a_i|).\label{coordinate.batch.task}
\end{equation}
Under some mild regularity conditions on $p$ \cite{Antoniadis2001}, the
minimization task of \eqref{coordinate.batch.task} possesses a unique
minimizer. Due to the separability of \eqref{coordinate.batch.task} in
coordinates, the minimization task of \eqref{coordinate.batch.task} can be
viewed as a task defined on an $1$-dimensional axis, instead of an
$L$-dimensional domain. Accordingly, the problem reduces to the univariate
PLS task described in \eqref{def.upls}.

Figs.~\ref{fig:variousthresholds1}(b-d), show the thresholding functions
(PLSTO, see \eqref{def.plsto}), which solve \eqref{def.upls} for some of
the most commonly employed penalty functions. For example, if $p(|a|)
\coloneqq \left[\lambda^2 - (|a|-\lambda)^2
  \chi_{[0,\lambda)}(|a|)\right]/\lambda$, $\forall a\in\Real$, then the
  resulting PLSTO is the celebrated \textit{Hard Thresholding (HT)} mapping
  \cite{Antoniadis2001}, which is depicted in
  Fig.~\ref{fig:variousthresholds1}a together with the well-known
  \textit{Soft Thresholding (ST)} mapping which results in the case where
  $p(|a|) \coloneqq |a|$, i.e. is chosen such that to lead to the LASSO
  task. Note that both ST and HT operators have been effectively employed
  in iterative thresholding schemes for fast sparse signal recovery under
  the compressed sensing framework \cite{Daubechies.Fornasier.Loris2008,
    BlumensathJST2010, Blumensath.IHT, Foucart.IHT}. The rest of the
  thresholding rules, shown in Fig.~\ref{fig:variousthresholds1}b
  correspond to the MC+ penalty \cite{Zhang2010, Mazumder2011} and the SCAD
  \cite{Fan_Li2001}, respectively. Both SCAD and MC+ leave large components
  unchanged, like HT, while avoiding being discontinuous and at the same
  time allowing a linear/gradual transition between the ``kill'' and the
  ``keep'' areas of HT. HT is far from being the only discontinuous
  thresholding operator. An example is shown in
  Figs.~\ref{fig:variousthresholds1}c, by the widely known Bridge threshold
  \cite{Frank1993}, which is related to the $\ell_{\gamma}$ penalty,
  $\gamma<1$ \cite{LorenzTikhonovRegularization}. Note that this
  thresholding rule comprise nonlinear segments. Continuous thresholding
  functions, that contain nonlinear parts, are shown in
  Fig.~\ref{fig:variousthresholds1}(d). More specifically, the non-negative
  garrote \cite{Gao1998} and representatives of the n-degree garrote
  threshold are shown. Similar thresholding functions are also the
  hyperbolic shrinkage rule \cite{Tao2000} and PLSTO's stemming from the
  nonlinear diffusive filtering approach \cite{Antoniadis2007}.

\section{Properties of the GT Mapping} \label{sec:properties.T_K}

\begin{theorem}\label{thm:properties.T_K}\mbox{}
\begin{enumerate}[leftmargin=0pt,itemindent=20pt]
\item\label{thm:same.support} $\forall\bm{x}\in \Real^L$,
  $J_{T_{\text{GT}}^{(K)}(\bm{x})}^{(K)}= J_{\bm{x}}^{(K)}$.
\item\label{thm:fix.T_K} $\Fix(T_{\text{GT}}^{(K)})= \bigcup_{J\in
  \mathscr{T}(K,L)} M_J$. Notice, here, that $\Fix(T_{\text{GT}}^{(K)})$,
  as a union of subspaces, is non-convex.
\item\label{thm:demiclosed} Let a sequence $(\bm{x}_n)_{n\in
  \Natural}\subset \Real^L$ and an $\bm{x}_*\in \Real^L$. If
  $\lim_{n\rightarrow\infty} \bm{x}_n = \bm{x}_*$, and
  $\lim_{n\rightarrow\infty} \bigl(I-T_{\text{GT}}^{(K)}\bigr)(\bm{x}_n) =
  \bm{0}$, then $\bm{x}_*\in \Fix(T_{\text{GT}}^{(K)})$. This property can
  be rephrased as $I-T_{\text{GT}}^{(K)}$ being \textit{demiclosed at
    $\bm{0}$} \cite{GoebelKirk}.
\item\label{thm:T_K.pqne} $T_{\text{GT}}^{(K)}$ is \textit{$1$-attracting
  partially quasi-nonexpansive,} i.e., $\forall \bm{x}\in \Real^L$,
  $\forall \bm{y}\in
  M_{J_{\bm{x}}^{(K)}}, \norm{\bm{x}-T_{\text{GT}}^{(K)}(\bm{x})}^2 \leq
  \norm{\bm{x}-\bm{y}}^2 - \norm{T_{\text{GT}}^{(K)}(\bm{x}) - \bm{y}}^2$.
\end{enumerate}
\end{theorem}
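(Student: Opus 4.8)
The plan is to treat the four parts in the order stated, since each later one leans on the earlier ones, and to reduce everything to the three structural properties of $\shr$ in Def.~\ref{def:GT} (sign preservation, contraction, and uniform strict shrinkage away from the origin).

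For Part~\ref{thm:same.support} I would write $\bm{z}\coloneqq T_{\text{GT}}^{(K)}(\bm{x})$ and simply track magnitudes: for $l\in J_{\bm{x}}^{(K)}$ one has $|z_l|=|x_l|\geq \xi_{\bm{x}}^{(K)}$, while for $l\notin J_{\bm{x}}^{(K)}$ the contraction property gives $|z_l|=|\shr(x_l)|\leq |x_l|\leq \xi_{\bm{x}}^{(K)}$. Hence the $K$ largest magnitudes of $\bm{z}$ lie exactly on $J_{\bm{x}}^{(K)}$; the only delicate point is a possible tie at the value $\xi_{\bm{x}}^{(K)}$, which is resolved by noting that the smallest-index convention already placed the tied survivors at the smaller indices, so the same tuple is reselected. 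For Part~\ref{thm:fix.T_K} I would first observe $\shr(0)=0$ (immediate from the contraction property). If $\bm{x}\in M_J$ then its nonzero entries, being larger than its zeros, are all captured by $J_{\bm{x}}^{(K)}$, so every discarded entry equals $0$ and is mapped to $\shr(0)=0$; thus $\bm{x}$ is fixed, giving $\bigcup_J M_J\subseteq \Fix(T_{\text{GT}}^{(K)})$. Conversely, if $\bm{x}$ is fixed then $\shr(x_l)=x_l$ for all $l\notin J_{\bm{x}}^{(K)}$; were some such $x_l\neq 0$, choosing $\epsilon$ small enough that $\epsilon<|x_l|$ in Def.~\ref{def:GT}.\ref{strictly.shrinks} would force $|\shr(x_l)|\leq |x_l|-\delta<|x_l|$, a contradiction. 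Hence every discarded entry vanishes, so $\bm{x}\in M_{J_{\bm{x}}^{(K)}}$, closing the reverse inclusion.

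Part~\ref{thm:T_K.pqne} I would handle by the standard expansion $\norm{\bm{x}-\bm{y}}^2-\norm{\bm{z}-\bm{y}}^2-\norm{\bm{x}-\bm{z}}^2=2\innprod{\bm{x}-\bm{z}}{\bm{z}-\bm{y}}$, which reduces the claimed inequality to $\innprod{\bm{x}-\bm{z}}{\bm{z}-\bm{y}}\geq 0$ for every $\bm{y}\in M_{J_{\bm{x}}^{(K)}}$. Coordinates in $J_{\bm{x}}^{(K)}$ contribute nothing, since $x_l=z_l$ there; for $l\notin J_{\bm{x}}^{(K)}$ one has $y_l=0$, so the summand is $(x_l-\shr(x_l))\shr(x_l)$, which is nonnegative because Def.~\ref{def:GT}.\ref{same.sign} and Def.~\ref{def:GT}.\ref{shrinks} force both $\shr(x_l)$ and $x_l-\shr(x_l)$ to carry the sign of $x_l$. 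Summing over coordinates gives the inner product bound, and since $M_{J_{\bm{x}}^{(K)}}\subseteq \Fix(T_{\text{GT}}^{(K)})$ by Part~\ref{thm:fix.T_K}, this exhibits $T_{\text{GT}}^{(K)}$ as $1$-attracting partially quasi-nonexpansive with the choice $Y_{\bm{x}}=M_{J_{\bm{x}}^{(K)}}$.

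The main obstacle is Part~\ref{thm:demiclosed}, where the discontinuity of $T_{\text{GT}}^{(K)}$ (through jumps of the index tuple) makes a direct passage to the limit fail. My plan is to exploit finiteness of $\mathscr{T}(K,L)$: along $(\bm{x}_n)$ the tuple $J_{\bm{x}_n}^{(K)}$ takes only $\binom{L}{K}$ values, so by the pigeonhole principle some tuple $J$ recurs infinitely often, producing a subsequence $(\bm{x}_{n_k})$ with $J_{\bm{x}_{n_k}}^{(K)}=J$ for all $k$. On this subsequence the residual $(I-T_{\text{GT}}^{(K)})(\bm{x}_{n_k})$ vanishes on $J$ and equals $x_{n_k,l}-\shr(x_{n_k,l})$ off $J$. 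If $x_{*,l}\neq 0$ for some $l\notin J$, then $|x_{n_k,l}|$ is eventually bounded below by $|x_{*,l}|/2$, and choosing $\epsilon<|x_{*,l}|/2$ the uniform strict shrinkage together with sign preservation gives $|x_{n_k,l}-\shr(x_{n_k,l})|=|x_{n_k,l}|-|\shr(x_{n_k,l})|\geq \delta>0$, contradicting $(I-T_{\text{GT}}^{(K)})(\bm{x}_{n_k})\to \bm{0}$. Hence $x_{*,l}=0$ for every $l\notin J$, so $\bm{x}_*\in M_J\subseteq \Fix(T_{\text{GT}}^{(K)})$ by Part~\ref{thm:fix.T_K}. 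I would emphasize that the \emph{uniformity} of $\delta$ in $\bm{x}$ in Def.~\ref{def:GT}.\ref{strictly.shrinks} is exactly what makes the bound hold simultaneously along the entire subsequence, which is the crux of the argument.
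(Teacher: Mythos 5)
Your proposal is correct in all four parts, and for Parts~\ref{thm:same.support}, \ref{thm:fix.T_K}, and \ref{thm:T_K.pqne} it is in substance the paper's own argument: your Part~\ref{thm:same.support} is the same tie-break bookkeeping (argued directly rather than by the paper's contradiction on a pair $l_0\in J_{\bm{x}}^{(K)}\setminus J_{\bm{z}}^{(K)}$, $l_0'\in J_{\bm{z}}^{(K)}\setminus J_{\bm{x}}^{(K)}$); your Part~\ref{thm:fix.T_K} is identical; and your Part~\ref{thm:T_K.pqne}, which reduces the claim via $\norm{\bm{x}-\bm{y}}^2-\norm{\bm{z}-\bm{y}}^2-\norm{\bm{x}-\bm{z}}^2 = 2\innprod{\bm{x}-\bm{z}}{\bm{z}-\bm{y}}$ to the coordinatewise inequality $\bigl(x_l-\shr(x_l)\bigr)\shr(x_l)\geq 0$, is the same computation the paper performs, merely packaged there through the reflection $R_K=2T_{\text{GT}}^{(K)}-I$ and the inequality $\shr^2(x_l)\leq x_l\shr(x_l)$ before invoking the very polarization identity you start from. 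The genuine difference is Part~\ref{thm:demiclosed}. The paper first proves a uniform decay lemma, $\lim_{n\rightarrow\infty}\max\bigl\{|x_{n,l}|: l\notin J_{\bm{x}_n}^{(K)}\bigr\}=0$ (its \eqref{a.good.property}), by contradiction, and then runs a case analysis comparing $J_{\bm{x}_*}^{(K)}$ with $J_{\infty}\coloneqq \liminf_{n\rightarrow\infty} J_{\bm{x}_n}^{(K)}$ (three subcases) to conclude $\bm{x}_*\in\bigcup_{J\in\mathscr{T}(K,L)}M_J$. You instead apply pigeonhole to the finite set $\mathscr{T}(K,L)$ to extract a subsequence along which the support tuple is a \emph{constant} $J$, and derive the contradiction locally at any coordinate $l\notin J$ with $x_{*,l}\neq 0$. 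This buys a noticeably shorter argument: it bypasses the $\liminf$ construction and the case analysis entirely, while isolating exactly the point on which both proofs turn, namely the uniformity in $\bm{x}$ of the $\delta$ in Def.~\ref{def:GT}.\ref{strictly.shrinks}, which is what makes the lower bound $|x_{n_k,l}-\shr(x_{n_k,l})|\geq\delta$ hold along the whole subsequence even though the domains $\mathcal{D}_{\bm{x}_{n_k}}$ vary with $k$. The paper's route yields the slightly stronger intermediate fact \eqref{a.good.property} about the full sequence, but that extra information is not needed for the stated claim.
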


\noindent\textit{Proof:}
\begin{enumerate}[leftmargin=0pt,itemindent=20pt]
\item Define $\bm{z} \coloneqq T_{\text{GT}}^{(K)}(\bm{x})$. In
order to derive a contradiction, assume that $J_{\bm{x}}^{(K)}\neq
J_{\bm{z}}^{(K)}$. Since both $J_{\bm{x}}^{(K)}, J_{\bm{z}}^{(K)}$ have the
same cardinality, the previous assumption means that there exist $l_0,
l_0'$ such that $l_0\in J_{\bm{x}}^{(K)} \setminus J_{\bm{z}}^{(K)}$, and
$l_0'\in J_{\bm{z}}^{(K)} \setminus J_{\bm{x}}^{(K)}$. Hence, $|x_{l_0'}| =
|\shr(x_{l_0'})| = |z_{l_0'}|\geq |z_{l_0}| = |x_{l_0}| \geq
|x_{l_0'}|$. The previous result implies that $|x_{l_0}| = |x_{l_0'}|$,
which, in turn, suggests by the definition of $J_{\bm{x}}^{(K)}$ that
$l_0<l_0'$. Moreover, $|z_{l_0'}| = |z_{l_0}|$ and $l_0'<l_0$ by the
definition of $J_{\bm{z}}^{(K)}$. Thus, $l_0<l_0'<l_0$, which is absurd.
This contradiction establishes the claim of
Thm.~\ref{thm:properties.T_K}.\ref{thm:same.support}.

\item Pick any $\bm{x}\in \bigcup_{J\in \mathscr{T}(K,L)} M_J$. It
is easy to verify by Def.~\ref{def:GT} that
$T_{\text{GT}}^{(K)}(\bm{x})=\bm{x}$, i.e., $\bm{x}\in
\Fix(T_{\text{GT}}^{(K)})$. To prove the opposite inclusion, assume any
$\bm{x}\in \Fix(T_{\text{GT}}^{(K)})$, i.e.,
$T_{\text{GT}}^{(K)}(\bm{x})=\bm{x}$. Since $\forall l\in
J_{\bm{x}}^{(K)}$, the relation $T_{\text{GT}}^{(K)}(\bm{x})=\bm{x}$ leads
to the trivial result $x_l= x_l$, we deal here only with the more
interesting case of $l\notin J_{\bm{x}}^{(K)}$. For such an $l$, according
to Def.~\ref{def:GT}, we must have $\shr(x_l)=x_l$, which implies that
$|\shr(x_l)|=|x_l|$. However, by the properties of $\shr$, given in
Defs.~\ref{def:GT}.\ref{shrinks} and \ref{def:GT}.\ref{strictly.shrinks},
we necessarily obtain that $x_l=0$. Since this holds $\forall l\notin
J_{\bm{x}}^{(K)}$, Def.~\ref{def:special.subspace} suggests that $\bm{x}\in
M_{J_{\bm{x}}^{(K)}}$. Now, recall that $J_{\bm{x}}^{(K)}\in
\mathscr{T}(K,L)$ to establish the inclusion $\bm{x}\in \bigcup_{J\in
  \mathscr{T}(K,L)} M_J$.

\item

\begin{enumerate}[leftmargin=0pt,itemindent=20pt]

\item Assume, for a contradiction, that there exists an
$\varepsilon>0$ and a subsequence $(n_k)_{k\in\Natural}$, such that
$\left|x_{n_k,l_{n_k}} \right|\geq \varepsilon$, $\forall l_{n_k} \notin
J_{\bm{x}_{n_k}}^{(K)}$, $\forall k\in\Natural$.

By Def.~\ref{def:GT}.\ref{strictly.shrinks}, $\exists \delta>0$ such that
$\bigl|\shr(x_{n_k,l_{n_k}}) \bigr|\leq
\bigl|x_{n_k,l_{n_k}}\bigr|-\delta$, $\forall k$. Then, it is easy to
verify that $\forall k$, $\bigl|x_{n_k,l_{n_k}} -
\shr(x_{n_k,l_{n_k}})\bigr| \geq \bigl|x_{n_k,l_{n_k}} \bigr| -
\bigl|\shr(x_{n_k,l_{n_k}})\bigr| \geq \bigl|x_{n_k,l_{n_k}}\bigr| -
\bigl|x_{n_k,l_{n_k}} \bigr| + \delta = \delta$. This implies that $\forall k$,
\begin{equation}
\sum_{l\notin J_{\bm{x}_{n_k}}^{(K)}}\bigl(x_{n_k,l}-
\shr(x_{n_k,l})\bigr)^2 \geq \bigl(x_{n_k,l_{n_k}}- \shr(x_{n_k,l_{n_k}})\bigr)^2
\geq \delta^2.\label{to.derive.a.contrad}
\end{equation}

Notice that $\norm{\bigl(I-T_{\text{GT}}^{(K)}\bigr)(\bm{x}_n)}^2=
\sum_{l\notin J_{\bm{x}_n}^{(K)}}\bigl(x_{n,l}-
\shr(x_{n,l})\bigr)^2$. Hence, the assumption that
$\lim_{n\rightarrow\infty}\bigl(I-T_{\text{GT}}^{(K)} \bigr)(\bm{x}_n) =
\bm{0}$ implies that for the $\delta$ of \eqref{to.derive.a.contrad},
$\exists n_0\in \Natural$ such that $\forall n\geq n_0$, $\sum_{l\notin
  J_{\bm{x}_n}^{(K)}}\bigl(x_{n,l}- \shr(x_{n,l})\bigr)^2 < \delta^2$. This
contradicts \eqref{to.derive.a.contrad}. In other words, our initial claim
is wrong, and the contrary proposition becomes: $\forall \varepsilon>0$,
there exists an $n_0\in\Natural$ such that $|x_{n,l}| <\varepsilon$,
$\forall l\notin J_{\bm{x}_n}^{(K)}$, $\forall n\geq n_0$. This can be
equivalently written in a more compact form as follows:
\begin{equation}
\lim_{n\rightarrow\infty} \max\bigl\{|x_{n,l}|: l\notin
J_{\bm{x}_n}^{(K)}\bigr\}=0. \label{a.good.property}
\end{equation}

\item Let us define here
\begin{equation}
J_{\infty} \coloneqq \liminf_{n\rightarrow\infty} J_{\bm{x}_n}^{(K)}
\coloneqq \bigcup_{n=0}^{\infty} \bigcap_{m=n}^{\infty}
J_{\bm{x}_m}^{(K)}. \label{liminf.J}
\end{equation}
In words, $J_{\infty}$ contains all those points which belong to
all but a finite number of $J_{\bm{x}_n}^{(K)}$s. There are two cases
regarding $J_{\bm{x}_*}^{(K)}$ and $J_{\infty}$; either
$J_{\bm{x}_*}^{(K)} \cap J_{\infty} \neq \emptyset$ or
$J_{\bm{x}_*}^{(K)} \cap J_{\infty} = \emptyset$. Notice that
the latter covers also the case where $J_{\infty}=
\emptyset$. Let us examine each case separately.

\begin{enumerate}[leftmargin=0pt,itemindent=20pt]

\item The case of $J_{\bm{x}_*}^{(K)} \cap J_{\infty} \neq
  \emptyset$.

\begin{enumerate}[leftmargin=0pt,itemindent=20pt]

\item Assume that $J_{\bm{x}_*}^{(K)} \subset
  J_{\infty}$. This implies that there exists an $n_0$ such that
  $J_{\bm{x}_*}^{(K)} \subset \bigcap_{n\geq n_0}
  J_{\bm{x}_n}^{(K)}$. Since both $J_{\bm{x}_*}^{(K)}$ and
  $J_{\bm{x}_n}^{(K)}$ have the same cardinality, i.e., $K$, we obtain
  that $\forall n\geq n_0$, $J_{\bm{x}_*}^{(K)}=
  J_{\bm{x}_n}^{(K)}$. Choose any $l\notin J_{\bm{x}_*}^{(K)}=
  J_{\bm{x}_n}^{(K)}$. By \eqref{a.good.property},
  $\lim_{n\rightarrow\infty} x_{n,l}= 0 = x_{*,l}$. Thus, $\forall
  l\notin J_{\bm{x}_*}^{(K)}$, $x_{*,l}=0$, or equivalently,
  $\bm{x}_*\in \bigcup_{J\in\mathscr{T}(K,L)} M_J$.

\item Assume now that $J_{\bm{x}_*}^{(K)} \not\subset
  J_{\infty}$. Hence, there exists an $l\in J_{\bm{x}_*}^{(K)}$
  and a subsequence $(n_k)_{k\in\Natural}$ such that $l\notin
  J_{\bm{x}_{n_k}}^{(K)}$, $\forall k\in\Natural$. By
  \eqref{a.good.property}, $\lim_{n\rightarrow\infty} x_{n,l} = 0 =
  x_{*,l}$. Since $l\in J_{\bm{x}_*}^{(K)}$, we clearly have that
  $x_{*,l'}=0$, $\forall l'\notin J_{\bm{x}_*}^{(K)}$. Hence,
  $\bm{x}_*\in \bigcup_{J\in\mathscr{T}(K,L)} M_J$.

\end{enumerate}

\item The case of $J_{\bm{x}_*}^{(K)} \cap J_{\infty} =
  \emptyset$. This means that there exists an $l\in
  J_{\bm{x}_*}^{(K)}$ and a subsequence $(n_k)_{k\in\Natural}$ such
  that $l\notin J_{\bm{x}_{n_k}}^{(K)}$, $\forall k\in\Natural$. Thus,
  similarly to our previous arguments, $\bm{x}_*\in
  \bigcup_{J\in\mathscr{T}(K,L)} M_J$.
\end{enumerate}
\end{enumerate}

\item Define $R_K \coloneqq 2T_{\text{GT}}^{(K)}-I$. Given any
  $\bm{x}\in \Real^L$, let $\bm{z} \coloneqq
  T_{\text{GT}}^{(K)}(\bm{x})$, as in Def.~\ref{def:GT}. Then, verify
  that $\forall \bm{y}\in M_{J_{\bm{x}}^{(K)}}$,
\begin{align*}
\norm{R_K(\bm{x}) - \bm{y}}^2 & = \sum_{l=1}^L (2z_l - x_l -
y_l)^2\\
& = \sum_{l\in J_{\bm{x}}^{(K)}} (x_l - y_l)^2 +
\sum_{l\notin J_{\bm{x}}^{(K)}} (2\shr(x_l) - x_l)^2\\
& \leq \sum_{l\in J_{\bm{x}}^{(K)}} (x_l - y_l)^2 +
\sum_{l\notin J_{\bm{x}}^{(K)}} x_l^2 =
\norm{\bm{x}-\bm{y}}^2.
\end{align*}
The previous inequality is obtained from the observation that the
properties of $\shr$ in Def.~\ref{def:GT} suggest $\shr^2(x_l) \leq
x_l\shr(x_l)$, and from the following elementary calculations:
$(2\shr(x_l)-x_l)^2= 4\shr^2(x_l) + x_l^2 - 4x_l\shr(x_l)\leq 4\shr^2(x_l)
+ x_l^2 - 4\shr^2(x_l)$. Hence, $\forall \bm{x}\in\Real^L$,
$\forall\bm{y}\in M_{J_{\bm{x}}^{(K)}}$, $\norm{R_K(\bm{x})-\bm{y}}^2 \leq
\norm{\bm{x}-\bm{y}}^2 \Leftrightarrow
\norm{2T_{\text{GT}}^{(K)}(\bm{x})-\bm{x} -\bm{y}}^2 \leq
\norm{\bm{x}-\bm{y}}^2 \Leftrightarrow
\norm{2\bigl(T_{\text{GT}}^{(K)}(\bm{x})-\bm{y}\bigr) - (\bm{x} -\bm{y})}^2
\leq \norm{\bm{x}-\bm{y}}^2 \Leftrightarrow
\norm{\bm{x}-T_{\text{GT}}^{(K)}(\bm{x})}^2 \leq \norm{\bm{x} - \bm{y}}^2 -
\norm{T_{\text{GT}}^{(K)}(\bm{x}) - \bm{y}}^2$, where in order to obtain
the last equivalence we used some elementary algebra, and the
fact\\ $2\innprod{\bm{x}-\bm{y}}{T_{\text{GT}}^{(K)}(\bm{x})-\bm{y}} =
\norm{\bm{x}-\bm{y}}^2 + \norm{T_{\text{GT}}^{(K)}(\bm{x})-\bm{y}}^2 -
\norm{\bm{x}-T_{\text{GT}}^{(K)}(\bm{x})}^2$. This establishes the claim of
Thm.~\ref{thm:properties.T_K}.\ref{thm:T_K.pqne}.
\end{enumerate}

\section{Proof of Theorem~\ref{thm:algo}}\label{sec:analysis.algo}

Let us define first a sequence of convex functions
  $(\Theta_n)_{n\in\Natural}$ in an inductive way. Given the
  time index $n$, and the estimate $\bm{a}_n\in \Real^L$, define the
  following convex function; $\forall \bm{a}\in \Real^L$,
\begin{equation*}
\Theta_n(\bm{a}) \coloneqq \begin{cases}
\sum_{i\in\mathcal{I}_n} \frac{\omega_i^{(n)} d(\bm{a}_n,
  S_i[\epsilon_i])} {\sum_{j\in\mathcal{I}_n}\omega_j^{(n)} d(\bm{a}_n,
  S_j[\epsilon_j])} d(\bm{a}, S_i[\epsilon_i]) & \\
\quad = \frac{1}{L_n}\sum_{i\in\mathcal{I}_n} \omega_i^{(n)} d(\bm{a}_n,
  S_i[\epsilon_i]) d(\bm{a}, S_i[\epsilon_i]), \\
& \hspace{-50pt}\text{if}\ \mathcal{I}_n
\neq \emptyset,\\
0, & \hspace{-50pt} \text{otherwise},
\end{cases}
\end{equation*}
where $L_n \coloneqq \sum_{j\in\mathcal{I}_n}\omega_j^{(n)}
d(\bm{a}_n, S_j[\epsilon_j])$. It is easy to verify by the definition of
$\mathcal{I}_n$, that if $\mathcal{I}_n\neq \emptyset$, then $\forall
i\in \mathcal{I}_n$, $d(\bm{a}_n, S_i[\epsilon_i])>0$, and thus $L_n>
0$. Moreover, if $\mathcal{I}_n= \emptyset$, then
$\Theta_n'(\bm{a})=0$, $\forall \bm{a}$.

Let us look closer to $\Theta_n$, and especially only the interesting
case of $\mathcal{I}_n\neq \emptyset$. By standard subgradient
calculus, it can be verified by \eqref{subdiff.distance} that
\begin{equation*}
\Theta_n'(\bm{a}_n) = \frac{1}{L_n} \sum_{i\in\mathcal{I}_n}
\omega_i^{(n)} d(\bm{a}_n, S_i[\epsilon_i]) \frac{\bm{a}_n -
  P_{S_i[\epsilon_i]}(\bm{a}_n)}{d(\bm{a}_n,S_i[\epsilon_i])} =
\frac{1}{L_n} \sum_{i\in\mathcal{I}_n} \omega_i^{(n)} \bigl(\bm{a}_n -
P_{S_i[\epsilon_i]}(\bm{a}_n)\bigr).
\end{equation*}
Thus, whenever $\mathcal{I}_n\neq\emptyset$, we have
$\Theta_n'(\bm{a}_n)= \bm{0}$ iff $\sum_{i\in\mathcal{I}_n}
\omega_i^{(n)} \bigl(\bm{a}_n - P_{S_i[\epsilon_i]}(\bm{a}_n)\bigr)=
\bm{0}$. Hence, for some user-defined parameter $\lambda_n>0$, it
is straightforward to see that
\begin{equation*}
\bm{a}_n - \lambda_n
\frac{\Theta_n(\bm{a}_n)} {\norm{\Theta_n'(\bm{a}_n)}^2}
\Theta_n'(\bm{a}_n) = \bm{a}_n - \lambda_n
\frac{\sum_{i\in\mathcal{I}_n} \omega_i^{(n)} d^2(\bm{a}_n,
  S_i[\epsilon_i])}{\norm{\sum_{i\in\mathcal{I}_n} \omega_i^{(n)}
  \bigl(\bm{a}_n - P_{S_i[\epsilon_i]}(\bm{a}_n)\bigr)}^2}
\sum_{i\in\mathcal{I}_n} \omega_i^{(n)} \bigl(\bm{a}_n -
P_{S_i[\epsilon_i]}(\bm{a}_n)\bigr).
\end{equation*}
If we let $\lambda_n \coloneqq \mu_n/\mathcal{M}_n$, then an
examination of \eqref{algo}, for both the cases of $\mathcal{I}_n\neq
\emptyset$ and $\mathcal{I}_n = \emptyset$, implies that the proposed
algorithm can be rephrased as follows; for $\lambda_n \coloneqq
\mu_n/\mathcal{M}_n \in [\varepsilon', 2-\varepsilon']$,
\begin{equation}
\bm{a}_{n+1} = \begin{cases}
T_{\text{GT}}^{(K)} \Bigl(\bm{a}_n - \lambda_n
\frac{\Theta_n(\bm{a}_n)} {\norm{\Theta_n'(\bm{a}_n)}^2}
\Theta_n'(\bm{a}_n) \Bigr), & \Theta_n'(\bm{a}_n)\neq \bm{0},\\
T_{\text{GT}}^{(K)}(\bm{a}_n), & \Theta_n'(\bm{a}_n)= \bm{0}.
\end{cases}\label{algo.b}
\end{equation}

\begin{enumerate}[leftmargin=0pt,itemindent=20pt]

\item Fix any $\bm{v}\in \Omega_n$, and assume that
  $\Theta_n'(\bm{a}_n) \neq 0$. Then, since $\bm{v}\in
  M_{J_{\bm{a}_n}^{(K)}} \subset \Fix(T_{\text{GT}}^{(K)})$,
\begin{align}
\norm{\bm{a}_{n+1} - \bm{v}}^2 & = \norm{ T_{\text{GT}}^{(K)}
  \Bigl(\bm{a}_n - \lambda_n \frac{\Theta_n(\bm{a}_n)}
       {\norm{\Theta_n'(\bm{a}_n)}^2} \Theta_n'(\bm{a}_n)\Bigr) -
       \bm{v}}^2 \nonumber \\ &\leq \norm{ (\bm{a}_n - \bm{v}) - \lambda_n
  \frac{\Theta_n(\bm{a}_n)} {\norm{\Theta_n'(\bm{a}_n)}^2}
  \Theta_n'(\bm{a}_n)}^2 \nonumber \\ & \leq \norm{\bm{a}_n - \bm{v}}^2
- \lambda_n(2-\lambda_n)
\frac{\Theta_n^2(\bm{a}_n)}{\norm{\Theta_n'(\bm{a}_n)}^2} \nonumber\\
& \leq \norm{\bm{a}_n - \bm{v}}^2 - (\varepsilon')^2
\frac{\Theta_n^2(\bm{a}_n)}{\norm{\Theta_n'(\bm{a}_n)}^2}, \label{1st.inequality}
\end{align}
where we used the property of
Thm.~\ref{thm:properties.T_K}.\ref{thm:T_K.pqne}, and the definition
of the subgradient $\Theta_n'(\bm{a}_n)$. As a result, $\norm{\bm{a}_{n+1}
  - \bm{v}} \leq \norm{\bm{a}_n - \bm{v}}$. Notice, that this holds
true also for the case where $\Theta_n'(\bm{a}_n) = 0$. Now, if we
apply $\inf_{\bm{v}\in \Omega_n}$ on both sides of the previous
inequality, then we establish the claim of
Thm.~\ref{thm:algo}.\ref{simple.monotinicity}.

\item Fix $n\in \overline{n_0, n_0+N-1}$. Assume that
  $\Theta_n'(\bm{a}_n) \neq 0$. Then, notice by the convexity of the
  function $\norm{\cdot}^2$ that
  \begin{align}
    \frac{\Theta_n^2(\bm{a}_n)}{\norm{\Theta_n'(\bm{a}_n)}^2} & =
    \frac{\sum\limits_{i\in\mathcal{I}_n} \omega_i^{(n)} d^2(\bm{a}_n,
      S_i[\epsilon_i]) \sum\limits_{i\in\mathcal{I}_n} \omega_i^{(n)} d^2(\bm{a}_n,
    S_i[\epsilon_i])}{\norm{\sum_{i\in\mathcal{I}_n} \omega_i^{(n)}
        \bigl(\bm{a}_n - P_{S_i[\epsilon_i]}(\bm{a}_n)\bigr)}^2}\nonumber\\
     & \geq \frac{1}{q} \sum_{i\in\mathcal{I}_n}
    d^2(\bm{a}_n, S_i[\epsilon_i]) \nonumber\\ &\geq \frac{1}{q} \max
  \bigl\{d^2(\bm{a}_n, S_j[\epsilon_j]):
  j\in\mathcal{J}_n\bigr\}. \label{fraction.Theta}
  \end{align}
Hence, by \eqref{fraction.Theta},
\begin{align}
\norm{\bm{a}_{n+1} - \bm{v}}^2 & \leq \norm{\bm{a}_n - \bm{v}}^2
\nonumber\\ &\hspace{10pt} - \frac{(\varepsilon')^2}{q}
\max\bigl\{d^2(\bm{a}_n,S_j[\epsilon_j]):
j\in\mathcal{J}_n\bigr\}. \label{2nd.inequality}
\end{align}
Notice also that \eqref{2nd.inequality} holds true also for the case where
$\Theta_n'(\bm{a}_n) = 0$. If we take the infimum over all $\bm{v}\in
\bigcap_{n=n_0}^{n_0+N-1} \Omega_n$ on both sides of
\eqref{2nd.inequality}, and if we add the resulting inequality for all
values of $n\in \overline{n_0,n_0+N-1}$, then the claim of
Thm.~\ref{thm:algo}.\ref{bound} is established.

\item

\begin{enumerate}[leftmargin=0pt,itemindent=20pt]

\item Choose arbitrarily any $\bm{v}\in \Omega$. Then, by
definition, there exists an $n_0$ such that $\bm{v}\in \bigcap_{n\geq
  n_0}\Omega_n$. Clearly, \eqref{2nd.inequality} leads to
$\norm{\bm{a}_{n+1} - \bm{v}} \leq \norm{\bm{a}_n - \bm{v}}$, $\forall
n\geq n_0$, i.e., $(\norm{\bm{a}_n- \bm{v}})_{n\geq n_0}$ is monotonically
non-increasing, and thus convergent. This result implies also that the
sequence $(\bm{a}_n)_{n\in\Natural}$ is bounded, and that
$\mathfrak{C}\bigl((\bm{a}_n)_{n\in\Natural} \bigr) \neq \emptyset$
\cite{bauschke.combettes.book}.

\item Let $\bm{v}\in \Omega$ and $n_0$ as previously. We have already
  seen that $\forall\bm{v}\in \Omega$, $(\norm{\bm{a}_n -
    \bm{v}}^2)_{n\in\Natural}$ is convergent. Thus, it is Cauchy, and
\begin{equation}
\lim_{n\rightarrow\infty} \bigl(\norm{\bm{a}_n -
    \bm{v}}^2 - \norm{\bm{a}_{n+1} - \bm{v}}^2 \bigr) = 0. \label{cauchy}
\end{equation}

Now, a simple inspection of \eqref{2nd.inequality} and \eqref{cauchy}
establish the claim of
Thm.~\ref{thm:algo}.\ref{thm:distance.goes.2.zero}.

\item Notice that $\forall n\geq n_0$, $\Theta_n'(\bm{a}_n)= \bm{0}$
  iff $\bm{a}_n\in \lev{0}(\Theta_n)= \bigcap_{i\in\mathcal{I}_n}
  S_i[\epsilon_i]\neq \emptyset$. Hence, for such $n$, \eqref{algo.b}
  takes the following equivalent form:
\begin{equation}
\bm{a}_{n+1} = T_{\text{GT}}^{(K)}
T_{\Theta_n}^{(\lambda_n)}(\bm{a}_n), \label{algo.c}
\end{equation}
where the mapping $T_{\Theta_n}$ is the \textit{subgradient projection
  mapping with respect to the convex $\Theta_n$} defined as
\cite{bauschke.combettes.book}: $T_{\Theta_n}^{(\lambda_n)}(\bm{a})
\coloneqq \bm{a} - \lambda_n \frac{\Theta_n(\bm{a})}
          {\norm{\Theta_n'(\bm{a})}^2} \Theta_n'(\bm{a})$, if $\bm{a}\notin
          \lev{0}(\Theta_n)$, and $T_{\Theta_n}^{(\lambda_n)}(\bm{a})
          \coloneqq \bm{a}$, if $\bm{a}\in \lev{0}(\Theta_n)$. The
          equations \eqref{1st.inequality} and \eqref{cauchy} imply that
          $\lim_{n\rightarrow\infty}
          \frac{\Theta_n^2(\bm{a}_n)}{\norm{\Theta_n'(\bm{a}_n)}^2} =
          0$. It is a matter of simple algebra to show also that
          $\norm{\bigl( I-T_{\Theta_n}^{(\lambda_n)} \bigr)(\bm{a}_n)} =
          \lambda_n \frac{\Theta_n(\bm{a}_n)}{\norm{\Theta_n'(\bm{a}_n)}}
          \leq 2 \frac{\Theta_n(\bm{a}_n)}{\norm{\Theta_n'(\bm{a}_n)}}$. As
          such,
\begin{equation}
\lim_{n\rightarrow\infty} \bigl( I-T_{\Theta_n}^{(\lambda_n)}
\bigr)(\bm{a}_n) = \bm{0}. \label{I-Tsub.2.zero}
\end{equation}
A remarkable property of the subgradient projection mapping is
  the following \cite{bauschke.combettes.book}:
  $\forall\bm{a}\in\Real^L$, $\forall \bm{v}\in \lev{0}(\Theta_n)\neq
  \emptyset$,
\begin{equation}
\frac{2-\lambda_n}{\lambda_n} \norm{\bm{a}-
  T_{\Theta_n}^{(\lambda_n)}(\bm{a})}^2 \leq \norm{\bm{a}-\bm{v}}^2 -
\norm{T_{\Theta_n}^{(\lambda_n)}(\bm{a}) -
  \bm{v}}^2. \label{subgrad.proj.attracts}
\end{equation}

By \eqref{algo.c} and
Thm.~\ref{thm:properties.T_K}.\ref{thm:same.support}, we can see that
$J^{(K)}_{\bm{a}_{n+1}}=
J^{(K)}_{T_{\Theta_n}^{(\lambda_n)}(\bm{a}_n)}$. Now, notice by
Thm.~\ref{thm:properties.T_K}.\ref{thm:T_K.pqne} and
\eqref{subgrad.proj.attracts} that $\forall\bm{v}\in \Omega$,
\begin{align}
\norm{T_{\Theta_n}^{(\lambda_n)} (\bm{a}_n)- T_{\text{GT}}^{(K)}
  T_{\Theta_n}^{(\lambda_n)} (\bm{a}_n)}^2 & \nonumber\\ & \hspace{-100pt}
\leq \norm{T_{\Theta_n}^{(\lambda_n)} (\bm{a}_n) - \bm{v}}^2 -
\norm{T_{\text{GT}}^{(K)}T_{\Theta_n}^{(\lambda_n)} (\bm{a}_n) - \bm{v}}^2
\nonumber\\ & \hspace{-100pt} = \norm{T_{\Theta_n}^{(\lambda_n)} (\bm{a}_n)
  - \bm{v}}^2 - \norm{\bm{a}_{n+1} - \bm{v}}^2
\nonumber\\ & \hspace{-100pt} \leq \norm{\bm{a}_n - \bm{v}}^2 -
\frac{2-\lambda_n}{\lambda_n} \norm{\bm{a}_n - T_{\Theta_n}^{(\lambda_n)}
  (\bm{a}_n)}^2 \nonumber\\ &\hspace{-90pt} - \norm{\bm{a}_{n+1} - \bm{v}}^2
\leq \norm{\bm{a}_n - \bm{v}}^2 - \norm{\bm{a}_{n+1} - \bm{v}}^2.
\end{align}
Thus, by \eqref{cauchy}, we obtain
\begin{equation}
\lim_{n\rightarrow\infty} \bigl(I-T_{\text{GT}}^{(K)}
  \bigr)T_{\Theta_n}^{(\lambda_n)} (\bm{a}_n) = \bm{0}. \label{I-Tgt.2.0}
\end{equation}

By Thm.~\ref{thm:algo}.\ref{thm:exist.cluster}, choose
any $\bm{\hat{a}}_*\in \mathfrak{C}\bigl((\bm{a}_n)_{n\in\Natural}
\bigr)$. Thus, there exists a subsequence
$(\bm{a}_{n_k})_{k\in\Natural}$ such that $\lim_{k\rightarrow\infty}
\bm{a}_{n_k}= \hat{\bm{a}}_*$. Hence, by \eqref{I-Tsub.2.zero},
$\lim_{k\rightarrow\infty} T_{\Theta_{n_k}}^{(\lambda_{n_k})}
(\bm{a}_{n_k}) = \hat{\bm{a}}_*$. This result, \eqref{I-Tgt.2.0}, and
Thm.~\ref{thm:properties.T_K}.\ref{thm:demiclosed} lead to
$\hat{\bm{a}}_*\in \Fix\bigl(T_{\text{GT}}^{(K)}\bigr)$. Since
$\hat{\bm{a}}_*$ was chosen arbitrarily, we obtain the desired
$\mathfrak{C}\bigl((\bm{a}_n)_{n\in\Natural} \bigr) \subset
\Fix\bigl(T_{\text{GT}}^{(K)}\bigr)$.

\end{enumerate}
\end{enumerate}

%%%%%%%%%%%%%%%%%%%%%%%%%%%%%

\end{document}